\documentclass[submission,copyright,creativecommons]{eptcs}

\usepackage{iftex}

\ifpdf
  \usepackage{underscore}         
  \usepackage[T1]{fontenc}        
\else
  \usepackage{breakurl}           
\fi

\usepackage{enumerate,xspace}
\usepackage{amsmath,amssymb,wasysym}
\usepackage[all]{xy}
\usepackage{multicol}

\newtheorem{observation}{Remark}[section]
\newtheorem{lemma}[observation]{Lemma}  
\newtheorem{theorem}[observation]{Theorem}
\newtheorem{definition}[observation]{Definition}
\newtheorem{example}[observation]{Example}

\newtheorem{proposition}[observation]{Proposition} 
\newtheorem{corollary}[observation]{Corollary}

\title{Moore-Penrose Dagger Categories}
\author{Robin Cockett\thanks{Partially funded by NSERC.}
\institute{Department of Computer Science\\
University of Calgary, Canada}
\email{robin@ucalgary.ca}
\and
Jean-Simon Pacaud Lemay\thanks{Financially supported by a JSPS PDF, Award \#: P21746 and a ARC DECRA, Award \#: DE230100303}
\institute{School of Mathematical and Physical Sciences\\
Macquarie University, Australia}
\email{js.lemay@mq.edu.au}
}

\begin{document}
\maketitle

\begin{abstract}
The notion of a Moore-Penrose inverse (M-P inverse) was introduced by Moore in 1920 and rediscovered by Penrose in 1955.  The M-P inverse of a complex matrix is a special type of inverse which is unique, always exists, and can be computed using singular value decomposition. In a series of papers in the 1980s, Puystjens and Robinson studied M-P inverses more abstractly in the context of dagger categories. Despite the fact that dagger categories are now a fundamental notion in categorical quantum mechanics, the notion of a M-P inverse has not (to our knowledge) been revisited since their work.  One purpose of this paper is, thus, to renew the study of M-P inverses in dagger categories. 

Here we introduce the notion of a Moore-Penrose dagger category and provide many examples including complex matrices, finite Hilbert spaces, dagger groupoids, and inverse categories. We also introduce generalized versions of singular value decomposition, compact singular value decomposition, and polar decomposition for maps in a dagger category, and show how, having such a decomposition is equivalent to having M-P inverses. This allows us to provide precise characterizations of which maps have M-P inverses in a dagger idempotent complete category, a dagger kernel category with dagger biproducts (and negatives), and a dagger category with unique square roots. 
\end{abstract}

\section{Introduction}

The Moore-Penrose inverse of an $n \times m$ complex matrix $A$ is an $m\times n$ complex matrix $A^\circ$ such that: $AA^\circ A =A$, $A^\circ A A^\circ = A^\circ$, $(AA^\circ)^\dagger = AA^\circ$, and $(A^\circ A)^\dagger = A^\circ A$, where $\dagger$ is the conjugate transpose operator. For any complex matrix, its Moore-Penrose inverse exists, is unique, and can be computed using singular value decomposition -- see Example \ref{ex:MPmatC}. The Moore-Penrose inverse is named after E. H. Moore and R. Penrose. Moore first described the notion in 1920  in terms of orthogonal projectors \cite{moore1920reciprocal}. Without knowing about Moore's work, in 1955 Penrose described the notion using the identities above  \cite{penrose1955generalized}.  Curious readers can learn more about the fascinating history of the Moore-Penrose inverse and its origin in \cite{baksalary2021moore,ben2002moore}. Many useful -- and quite recent -- applications of the Moore-Penrose inverse in mathematics, physics, and computer science are described by Baksalary and Trenkler in \cite{baksalary2021moore}.

The Moore-Penrose inverse can be generalized to other contexts besides complex matrices. For example, one may consider the Moore-Penrose inverse of a matrix over an involutive ring. While the Moore-Penrose inverse may not always exist, for certain involutive rings it is possible to precisely characterize which matrices have Moore-Penrose inverses. One can also consider Moore-Penrose inverses in involutive semigroups, and in particular in $C^\ast$-algebras. It is also possible to define the notion of Moore-Penrose inverses for bounded linear operators between Hilbert spaces, and to characterize precisely which have a Moore-Penrose inverse. Following in this direction, one can in fact define the notion of a Moore-Penrose inverse for maps in \emph{dagger categories}. 

Selinger in \cite{selinger2007dagger} introduced the term ``dagger category'', based on the use in physics of the symbol $\dagger$ for conjugate transpose.  Dagger categories are simply categories equipped with an involution on maps (Def \ref{def:dagger}). In a dagger category, a Moore-Penrose inverse of a map $f: A \to B$ is a map in the reverse direction $f^\circ: B \to A$ satisfying the equations above (Def \ref{def:MP}). The existence and computations of Moore-Penrose inverses for maps in general dagger categories were studied by Puystjens and Robinson in a series of papers in the 1980s \cite{puystjens1981moore, puystjens1984moore, robinson1985ep, robinson1987generalized, puystjens1990symmetric}. Since Puystjens and Robinson's work, there does not appear to have been any further development of Moore-Penrose inverses in dagger categories. This, despite the fact that the theory of dagger categories itself has undergone significant development. Indeed, in the last decade, dagger categories have become a fundamental component of categorical quantum mechanics (see Heunen and Vicary's introductory level book on the subject \cite{heunen2019categories}). Therefore, it makes perfect sense to revisit Moore-Penrose inverses in the context of dagger categories. 

The main objective of this paper is to revisit and renew the study of Moore-Penrose inverses in dagger categories, in the hope that this will lead to new applications in categorical quantum mechanics and elsewhere. We shall apply techniques which have been developed since Puystjens and Robinson's work, such as dagger idempotent splitting and dagger kernels, to Moore-Penrose inverses. We also introduce and study the natural concept of a \textbf{Moore-Penrose dagger category}, which is a dagger category where every map has a Moore-Penrose inverse. We provide many examples of Moore-Penrose dagger categories including well-known ones, such as the category of complex matrices or finite-dimensional Hilbert spaces, and also various new ones, such as dagger groupoids and inverse categories. 

As was mentioned above, singular value decomposition can be used to compute Moore-Penrose inverses of complex matrices. In Section \ref{sec:SVD}, we introduce a generalized version of singular value decomposition for maps in a dagger category with dagger biproducts. Then, by using dagger kernels, we show how having a generalized singular value decomposition is equivalent to having a Moore-Penrose inverse (Thm \ref{thm:MPSVD}). Another way to compute the Moore-Penrose inverse is by using \emph{compact} singular value decomposition: this is often easier to compute than full singular value decomposition. In Section \ref{sec:CSVD}, we introduce a generalized version of compact singular value decomposition for maps in any dagger category and then prove that having a generalized compact singular value decomposition is equivalent to having a Moore-Penrose inverse when dagger idempotents split (Prop \ref{lem:CSVDMP}). Therefore, we obtain a precise characterization of maps that have a Moore-Penrose inverse in any dagger idempotent complete category (Thm \ref{thm:MPGCSVD}). Lastly in Section \ref{sec:CPD}, we give a novel application of Moore-Penrose inverses by introducing the notion of a Moore-Penrose polar decomposition, which captures precisely polar decomposition for complex matrices. 

\textbf{Acknowledgements:} The authors would like to thank Chris Heunen for useful discussions and support of this project, as well as thank Ben MacAdam and Cole Comfort for initial discussions on Moore-Penrose inverses and possible relations to restriction categories. The authors would also like to thank Masahito Hasegawa and RIMS at Kyoto University for helping fund research visits so that the authors could work together on this project. 

\section{Moore-Penrose Inverses}

In this section, we discuss Moore-Penrose inverses and some basic properties thereof. In addition, Moore-Penrose dagger categories are introduced and various examples are provided. To set up notation and terminology, we begin by quickly reviewing the basics of dagger categories. For a more in-depth introduction to dagger categories, we refer the reader to \cite{heunen2019categories}. For an arbitrary category $\mathbb{X}$, we denote objects by capital letters $A,B,X,Y$, etc. and maps by lowercase letters $f,g,h$, etc. Identity maps are denoted as $1_A: A \to A$.  Composition is written in \emph{diagrammatic order}, that is, the composition of a map $f: A \to B$ followed by $g: B \to C$ is denoted $fg: A \to C$. 

\begin{definition}\label{def:dagger} \cite[Def 2.32]{heunen2019categories} A \textbf{dagger} on a category $\mathbb{X}$ is a contravariant functor $(\_)^\dagger: \mathbb{X} \to \mathbb{X}$ which is the identity on objects and involutive. We refer to $f^\dagger$ as the \textbf{adjoint} of $f$. A \textbf{dagger category} is a pair $(\mathbb{X}, \dagger)$ consisting of a category $\mathbb{X}$ equipped with a dagger $\dagger$. 
\end{definition}

Concretely, a dagger category can be described as a category $\mathbb{X}$ where for each map $f: A \to B$, there is a chosen map of dual type $f^\dagger: B \to A$ such that $1^\dagger_A = 1_A$, $(fg)^\dagger = g^\dagger f^\dagger$, and $(f^\dagger)^\dagger = f$. Thus, $(\_)^\dagger$ is a contravariant functor which is, furthermore, an involution -- so the adjoint of the adjoint of $f$ is $f$ itself. It is important to note that a category $\mathbb{X}$ can have multiple different daggers. This means that a dagger on a category is structure which must be chosen. Examples of dagger categories can be found below. Here are some special maps in a dagger category: 

\begin{definition}\cite[Def 2.34]{heunen2019categories} In a dagger category $(\mathbb{X}, \dagger)$: 
\begin{enumerate}[{\em (i)}]
\item A map $s: A \to B$ is an \textbf{isometry} if $ss^\dagger = 1_A$;
\item A map $r: A \to B$ is a \textbf{coisometry} if $r^\dagger r=1_B$; 
\item A map $u: A \to B$ is a \textbf{unitary isomorphism} if $uu^\dagger=1_A$ and $u^\dagger u =1_B$;
\item A map $q: A \to B$ is a \textbf{partial isometry} if $qq^\dagger q = q$;
\item A map $h: A \to A$ is \textbf{self-adjoint} (or \textbf{Hermitian}) if $h^\dagger = h$;
\item A map $p: A \to A$ is \textbf{positive} if there exists a map $f: A \to X$ such that $p = ff^\dagger$;
\item A map $e: A \to A$ is a \textbf{$\dagger$-idempotent} if it self-adjoint and idempotent, that is, $e^\dagger= e$ and $ee=e$. 
\end{enumerate}
\end{definition}

This allows us to define the main concept of interest for this paper: 

\begin{definition} \label{def:MP} In a dagger category $(\mathbb{X}, \dagger)$, a \textbf{Moore-Penrose inverse} (M-P inverse) of a map $f: A \to B$ is a map $f^\circ: B \to A$ such that the following equalities hold: 
\begin{multicols}{4}
\begin{enumerate}[{\bf [MP.1]}]
\item $ff^\circ f = f$
\columnbreak
\item $f^\circ f f^\circ =f^\circ$
\columnbreak 
\item $(ff^\circ)^\dagger= ff^\circ$
\columnbreak
\item $(f^\circ f)^\dagger = f^\circ f$
\end{enumerate}
\end{multicols}
\noindent If $f$ has a M-P inverse, we say that $f$ is \textbf{Moore-Penrose invertible} (M-P invertible). A \textbf{Moore-Penrose dagger category} is a dagger category such that every map is M-P invertible. 
\end{definition}

{\bf [MP.1]} and {\bf [MP.2]} say that $f^\circ$ is a ``regular'' inverse of $f$, while {\bf [MP.3]} and {\bf [MP.4]} say that $ff^\circ$ and $f^\circ f$ are self-adjoint. This allows us to interpret $ff^\circ$ as the projection of the domain of $f$, while $f^\circ f$ is the projection of the range of $f$. Examples of Moore-Penrose dagger categories can be found below. However, before looking at examples, we state some basic results for M-P inverses. Most importantly, M-P inverses (if they exist) are unique: 

\begin{lemma} In a dagger category $(\mathbb{X}, \dagger)$, if a map $f: A \to B$ has a M-P inverse $f^\circ: B \to A$, then $f^\circ$ is the unique map which satisfies {\bf [MP.1]} to {\bf [MP.4]}.  
\end{lemma}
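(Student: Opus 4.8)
The plan is the classical uniqueness argument for Moore--Penrose inverses, adapted to the dagger setting and to diagrammatic composition order. Suppose $g: B \to A$ is any map satisfying \textbf{[MP.1]}--\textbf{[MP.4]} in place of $f^\circ$; I want to show $g = f^\circ$. The strategy is to prove that both $g$ and $f^\circ$ are equal to a single ``bridge'' expression, namely $g f f^\circ$, from which $g = g f f^\circ = f^\circ$ follows immediately. The whole argument is purely equational, using only the four axioms together with the functoriality and involutivity of the dagger.

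To show $f^\circ = g f f^\circ$, I would start from $f^\circ = f^\circ f f^\circ$ (\textbf{[MP.2]} for $f^\circ$) and rewrite the block $f^\circ f$ using its self-adjointness \textbf{[MP.4]}: since $(f^\circ f)^\dagger = f^\circ f$, we may replace $f^\circ f$ by $f^\dagger (f^\circ)^\dagger$. Next, inside this expression I would substitute $f = fgf$ (\textbf{[MP.1]} for $g$) for the remaining $f^\dagger$, so that $f^\dagger$ becomes $f^\dagger g^\dagger f^\dagger$. Regrouping via the contravariant rule $(xy)^\dagger = y^\dagger x^\dagger$ turns the factors back into $(gf)^\dagger$ and $(f^\circ f)^\dagger$; applying \textbf{[MP.4]} once more to drop both daggers, and then \textbf{[MP.2]} for $f^\circ$ to contract the trailing $f^\circ f f^\circ$ back to $f^\circ$, leaves exactly $g f f^\circ$.

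The computation $g = g f f^\circ$ is the mirror image, using \textbf{[MP.3]} in place of \textbf{[MP.4]}: start from $g = g f g$ (\textbf{[MP.2]} for $g$), flip the block $fg$ via its self-adjointness \textbf{[MP.3]} so that $fg$ becomes $g^\dagger f^\dagger$, substitute $f = f f^\circ f$ (\textbf{[MP.1]} for $f^\circ$) into the resulting $f^\dagger$, regroup into $(fg)^\dagger$ and $(f f^\circ)^\dagger$, remove the daggers with \textbf{[MP.3]}, and contract the leading $g f g$ with \textbf{[MP.2]} for $g$. The two results coincide at $g f f^\circ$, which yields $g = f^\circ$ and hence uniqueness.

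Since the argument is entirely formal, the only genuine hazard is bookkeeping. Because composition is written diagrammatically while the dagger is contravariant, the delicate step is regrouping a product of four or five factors into adjoints of the \emph{correct} adjacent pairs: it is easy to write $(xy)^\dagger$ as $x^\dagger y^\dagger$ rather than $y^\dagger x^\dagger$, or to pair the wrong factors before invoking \textbf{[MP.3]}/\textbf{[MP.4]}. I expect this reassociation, rather than any conceptual difficulty, to be the main thing to get right; once the pairings are chosen so that each self-adjoint block $fg$, $gf$, $ff^\circ$, $f^\circ f$ appears explicitly, every axiom application is forced.
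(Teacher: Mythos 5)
Your proposal is correct and is essentially the paper's own argument: the paper proves $f^\bullet f = f^\circ f$ and $ff^\bullet = ff^\circ$ by exactly the same dagger-flip-and-regroup manipulations and then chains $f^\bullet = f^\bullet f f^\bullet = f^\circ f f^\bullet = f^\circ f f^\circ = f^\circ$, which is just a repackaging of your two identities $g = gff^\circ = f^\circ$. Both computations in your outline check out step by step, including the diagrammatic-order bookkeeping you flag as the main hazard.
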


\begin{proof}
  Suppose that for a map $f: A \to B$, there exist maps $f^\circ: B \to A$ and $f^\bullet: B \to A$ which are both M-P inverses of $f$. Then we first compute that: 
\[  f^\bullet f  = f^\bullet f f^\circ f= (f^\bullet f)^\dagger (f^\circ f)^\dagger 
= f^\dagger (f^\bullet)^\dagger f^\dagger (f^\circ)^\dagger =  (f f^\bullet f)^\dagger  (f^\circ)^\dagger = f^\dagger (f^\circ)^\dagger = (f^\circ f)^\dagger = f^\circ f. \]
So $f^\bullet f = f^\circ f$ and, similarly, we can also compute that $f f^\bullet = f f^\circ$. This allows the observation that:
\[f^\bullet = f^\bullet f f^\bullet = f^\circ f f^\bullet = f^\circ f f^\circ = f^\circ\] 
So $f^\circ = f^\bullet$ and therefore Moore-Penrose inverses are unique.   
 \end{proof}

An important consequence of the above lemma is that, for a dagger category, being Moore-Penrose is a property rather than a structure. That said, it is important to note that a map can have a M-P inverse with respect to one dagger structure but fail to have one for another, see Example \ref{ex:MPmatCT}. Having a M-P inverse, has a number of consequences: 

\begin{lemma}\label{lem:MPinv1} In a dagger category $(\mathbb{X}, \dagger)$, if $f$ has a M-P inverse $f^\circ$ then: 
\begin{enumerate}[{\em (i)}]
\item $f^\circ$ is also M-P invertible where ${f^\circ}^\circ=f$; 
\item $f^\dagger$ is also M-P invertible where ${f^\dagger}^\circ = {f^\circ}^\dagger$;
\item \label{lem:MPfidem} $ff^\circ$ and $f^\circ f$ are $\dagger$-idempotents and M-P invertible where $(ff^\circ)^\circ = ff^\circ$ and $(f^\circ f)^\circ = f^\circ f$; 
\item \label{lem:MPdagidem} $ff^\dagger$ and $f^\dagger f$ are M-P invertible where $(ff^\dagger)^\circ = {f^\dagger}^\circ f^\circ$ and $(f^\dagger f)^\circ = f^\circ{f^\dagger}^\circ$; 
\item $ff^\circ= {f^\dagger}^\circ f^\dagger$ and $f^\circ f = f^\dagger {f^\dagger}^\circ$;
\item \label{lem:MPfeq} $f = f f^\dagger {f^\dagger}^\circ = {f^\dagger}^\circ f^\dagger f$;
\item \label{lem:MPfcirceq} $f^\circ = f^\circ {f^\dagger}^\circ f^\dagger = f^\dagger {f^\dagger}^\circ f^\circ$;
\item \label{lem:MPfdageq} $f^\dagger = f^\dagger f f^\circ = f^\circ f f^\dagger$; 
\item \label{lem:MPselfadjoint} If $f$ is self-adjoint, then $f^\circ$ is also self-adjoint (i.e. ${f^\circ}^\dagger=f^\circ$) and $f^\circ f = f f^\circ$;  
\item \label{lem:MPprisom2} If $f^\circ = f^\dagger$, then $f$ is a partial isometry. 
\end{enumerate}
\end{lemma}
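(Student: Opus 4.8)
The plan is to verify each of the ten parts by direct equational manipulation of the four defining identities \textbf{[MP.1]}--\textbf{[MP.4]}, leaning heavily on the uniqueness of M-P inverses established in the previous lemma: to show that a map $g$ has a prescribed M-P inverse $h$, it suffices to check that $h$ satisfies \textbf{[MP.1]}--\textbf{[MP.4]} relative to $g$, whereupon uniqueness forces $g^\circ = h$. So the strategy throughout is to write down the natural candidate and confirm the four axioms.

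For (i) and (ii) I would exploit the symmetry of the axioms. For (i), the four conditions asserting that $f$ is the M-P inverse of $f^\circ$ are precisely \textbf{[MP.2]}, \textbf{[MP.1]}, \textbf{[MP.4]}, \textbf{[MP.3]} for $f$ (with the two projector conditions interchanged), so $f$ qualifies and uniqueness gives ${f^\circ}^\circ = f$. For (ii), I would take ${f^\circ}^\dagger$ as candidate and check its axioms by applying $(\_)^\dagger$ to \textbf{[MP.1]}--\textbf{[MP.4]} for $f$, using $(f^\circ f)^\dagger = f^\dagger {f^\circ}^\dagger$ together with \textbf{[MP.3]}/\textbf{[MP.4]} to confirm self-adjointness. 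Part (iii) then follows by noting $ff^\circ$ is self-adjoint by \textbf{[MP.3]} and idempotent by \textbf{[MP.2]} (symmetrically $f^\circ f$ via \textbf{[MP.4]}), and that any $\dagger$-idempotent is its own M-P inverse.

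The linchpin is part (v), which I would prove immediately after (ii) even though it is listed later: combining ${f^\dagger}^\circ = {f^\circ}^\dagger$ from (ii) with \textbf{[MP.3]} gives ${f^\dagger}^\circ f^\dagger = (ff^\circ)^\dagger = ff^\circ$, and dually $f^\dagger{f^\dagger}^\circ = f^\circ f$ from \textbf{[MP.4]}. With these two identities in hand, parts (vi)--(ix) unwind quickly: (vi) by substituting $f^\circ f = f^\dagger{f^\dagger}^\circ$ into $f = ff^\circ f$; (vii) similarly from \textbf{[MP.2]}; (viii) by taking daggers of (vi) and reusing (ii); and (ix) by specialising $f = f^\dagger$ in (ii) and (v). Finally (x) is immediate, since $f^\circ = f^\dagger$ turns \textbf{[MP.1]} into the partial-isometry equation $ff^\dagger f = f$.

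I expect the only genuinely laborious step to be part (iv). Here I would verify all four axioms for the claimed inverse ${f^\dagger}^\circ f^\circ$ of $ff^\dagger$ by repeatedly rewriting with the idempotents of part (v): for instance $(ff^\dagger)({f^\dagger}^\circ f^\circ) = f(f^\circ f)f^\circ = ff^\circ$ is self-adjoint, and $(ff^\dagger)({f^\dagger}^\circ f^\circ)(ff^\dagger) = f(f^\circ f)f^\dagger = ff^\dagger$ via \textbf{[MP.1]}. The $f^\dagger f$ statement then drops out by applying the $ff^\dagger$ result to $f^\dagger$ in place of $f$ and translating through (i) and (ii). The main risk is purely bookkeeping --- keeping the many daggers and the two projector identities straight --- rather than any conceptual difficulty.
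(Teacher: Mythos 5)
Your proposal is correct: the paper leaves this lemma as an exercise in direct verification of \textbf{[MP.1]}--\textbf{[MP.4]}, and your outline (exhibit the candidate inverse, check the four axioms, invoke uniqueness, and use the key identities ${f^\dagger}^\circ f^\dagger = ff^\circ$ and $f^\dagger{f^\dagger}^\circ = f^\circ f$ from part (v) to drive parts (iv) and (vi)--(ix)) is exactly the intended argument. All the individual computations you sketch check out.
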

\begin{proof} These are straightforward to check, so we leave them as an exercise for the reader.     \end{proof}

It is known that computing M-P inverses of complex matrices can be reduced to computing the M-P inverses of Hermitian positive semi-definite matrices. The same is true in dagger categories: 



\begin{lemma} In a dagger category $(\mathbb{X}, \dagger)$, for any map $f: A\to B$, the following are equivalent: 
\begin{enumerate}[{\em (i)}]
\item $f$ is M-P invertible;
\item $f^\dagger f$ is M-P invertible and $f (f^\dagger f)^\circ f^\dagger f = f$;
\item $ff^\dagger$ is M-P invertible and $f f^\dagger (ff^\dagger)^\circ f = f$
\end{enumerate}
Therefore $(\mathbb{X}, \dagger)$ is Moore-Penrose if and only if every map $f$ satisfies $(ii)$ or $(iii)$.    
\end{lemma}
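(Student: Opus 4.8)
The plan is to prove the cycle $(i) \Rightarrow (ii) \Rightarrow (i)$ to obtain $(i) \Leftrightarrow (ii)$, and then to deduce $(i) \Leftrightarrow (iii)$ essentially for free by a dagger-duality argument, since the entire setup is symmetric under replacing $f$ by $f^\dagger$. The closing ``Moore-Penrose if and only if'' sentence is then immediate: being Moore-Penrose means precisely that every map satisfies $(i)$, so it is equivalent to every map satisfying $(ii)$, or equivalently $(iii)$.

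For $(i) \Rightarrow (ii)$, assume $f$ has a M-P inverse $f^\circ$. Part (iv) of Lemma \ref{lem:MPinv1} already gives that $f^\dagger f$ is M-P invertible, with $(f^\dagger f)^\circ = f^\circ {f^\dagger}^\circ$. It then remains only to verify $f (f^\dagger f)^\circ f^\dagger f = f$. I would substitute this formula, use the identity $ff^\circ = {f^\dagger}^\circ f^\dagger$ from part (v), and use that $ff^\circ$ is a $\dagger$-idempotent (so $ff^\circ ff^\circ = ff^\circ$); this collapses the left-hand side to $ff^\circ f$, which is $f$ by \textbf{[MP.1]}.

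The crux is $(ii) \Rightarrow (i)$, where I expect the real work to lie. Writing $g := (f^\dagger f)^\circ$, the natural candidate for the inverse is $f^\circ := g f^\dagger$, mirroring the matrix formula $A^\circ = (A^\dagger A)^\circ A^\dagger$. I would then check the four axioms in turn. \textbf{[MP.1]} is exactly the extra hypothesis $f g f^\dagger f = f$. \textbf{[MP.2]} follows from $g(f^\dagger f) g = g$ (the \textbf{[MP.2]} identity for $g$), since $f^\circ f f^\circ = g(f^\dagger f)g f^\dagger = g f^\dagger = f^\circ$. The two self-adjointness axioms are exactly where the hypothesis that $f^\dagger f$ is self-adjoint becomes essential: by part (ix) of Lemma \ref{lem:MPinv1}, $g$ is itself self-adjoint, so $ff^\circ = f g f^\dagger$ is manifestly self-adjoint, giving \textbf{[MP.3]}, while $f^\circ f = g(f^\dagger f)$ is self-adjoint directly from the \textbf{[MP.4]} identity for $g$. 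Hence the candidate satisfies all four axioms and $f$ is M-P invertible.

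Finally, for $(i) \Leftrightarrow (iii)$ I would apply the already-proven equivalence $(i) \Leftrightarrow (ii)$ to $f^\dagger$ in place of $f$. Since $f$ is M-P invertible iff $f^\dagger$ is (Lemma \ref{lem:MPinv1}, part (ii)) and $(f^\dagger)^\dagger f^\dagger = ff^\dagger$, this yields that $(i)$ is equivalent to $ff^\dagger$ being M-P invertible together with $f^\dagger (ff^\dagger)^\circ ff^\dagger = f^\dagger$. Taking the adjoint of this last equation, and using that $ff^\dagger$ and hence $(ff^\dagger)^\circ$ are self-adjoint, rewrites it as precisely $ff^\dagger (ff^\dagger)^\circ f = f$, which is the condition in $(iii)$. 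The only genuine obstacle is the bookkeeping in $(ii) \Rightarrow (i)$; once the candidate $g f^\dagger$ is identified, every axiom reduces either to one of the M-P axioms for $g$ or to the self-adjointness of $g$.
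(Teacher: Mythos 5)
Your proposal is correct and follows essentially the same route as the paper: the forward direction is Lemma \ref{lem:MPinv1} plus \textbf{[MP.1]}, and the converse uses the same candidate inverse $(f^\dagger f)^\circ f^\dagger$, with \textbf{[MP.2]}--\textbf{[MP.4]} reducing to the M-P axioms for $(f^\dagger f)^\circ$ and its self-adjointness, and the extra hypothesis supplying \textbf{[MP.1]}. Deducing $(iii)$ by applying $(ii)$ to $f^\dagger$ rather than arguing symmetrically is only a cosmetic difference.
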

\begin{proof} Lemma \ref{lem:MPinv1}.(\ref{lem:MPdagidem}) and (\ref{lem:MPfcirceq}) gives us $(i) \Rightarrow (ii)$ and $(i) \Rightarrow (iii)$. Conversely, if  $f^\dagger f$ (resp. $f f^\dagger$) is M-P invertible, then $(f^\dagger f)^\circ f^\dagger$ (resp. $f^\dagger (ff^\dagger)^\circ$) will always satisfy \textbf{[MP.2]}, \textbf{[MP.3]}, and \textbf{[MP.4]}. The extra assumption that $f (f^\dagger f)^\circ f^\dagger f = f$ (resp. $f f^\dagger (ff^\dagger)^\circ f = f$) is precisely \textbf{[MP.1]}. So we have that $f$ is M-P invertible, giving $(ii) \Rightarrow (i)$ and $(iii) \Rightarrow (i)$. 
\end{proof}


In any dagger category, there are some maps that always have M-P inverses: 

\begin{lemma}\label{lem:MPmaps} In a dagger category $(\mathbb{X}, \dagger)$: 
\begin{enumerate}[{\em (i)}]
\item Identity maps $1_A$ are M-P invertible where $1^\circ_A = 1_A$; 
\item If $f$ is an isomorphism, then $f$ is M-P invertible where $f^\circ = f^{\text{-}1}$;
\item \label{lem:MPparisom} If $f$ is a partial isometry or a (co)isometry or unitary, then $f$ is M-P invertible where $f^\circ=f^\dagger$;
\item \label{lem:MPidem} If $e$ is a $\dagger$-idempotent, then $e$ is M-P invertible where $e^\circ = e$;
\item If $p$ is a positive map such that there exists a M-P invertible map $f$ such that $p= ff^\dagger$, then $p$ is M-P invertible where $p^\circ = {f^\circ}^\dagger f^\circ$, and so $p^\circ$ is also positive;
\item If $p$ is a positive map and M-P invertible, then for any map $f$ such that $p=ff^\dagger$ and $pp^\circ f = f$, $f$ is also M-P invertible where $f^\circ = f^\dagger p^\circ$.
\end{enumerate}
\end{lemma}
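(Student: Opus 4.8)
The plan is to verify \textbf{[MP.1]}--\textbf{[MP.4]} directly for each proposed inverse, leaning on Lemma~\ref{lem:MPinv1} wherever it saves work. Items $(i)$ and $(ii)$ are immediate substitutions: for $(i)$, taking $f = f^\circ = 1_A$ makes all four equations read $1_A = 1_A$; for $(ii)$, with $f^\circ = f^{\text{-}1}$ both products $ff^\circ$ and $f^\circ f$ collapse to identities, so \textbf{[MP.1]} and \textbf{[MP.2]} are the cancellation laws while \textbf{[MP.3]} and \textbf{[MP.4]} follow from $1^\dagger = 1$.

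For $(iii)$ I would first observe that it suffices to treat the partial isometry case, since an isometry $s$ satisfies $ss^\dagger s = (ss^\dagger)s = s$, a coisometry $r$ satisfies $rr^\dagger r = r(r^\dagger r) = r$, and a unitary is both; each is therefore in particular a partial isometry. Setting $q^\circ := q^\dagger$, equation \textbf{[MP.1]} is exactly $qq^\dagger q = q$, \textbf{[MP.2]} is its adjoint, and \textbf{[MP.3]}, \textbf{[MP.4]} hold because $(qq^\dagger)^\dagger = qq^\dagger$ and $(q^\dagger q)^\dagger = q^\dagger q$ automatically. Item $(iv)$ is then the special case where the $\dagger$-idempotent $e$ satisfies $e^\dagger = e$, giving $e^\circ = e^\dagger = e$; one may also check the four axioms directly from $ee = e$ and $e^\dagger = e$. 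Item $(v)$ I would derive from the preceding lemma rather than from scratch: Lemma~\ref{lem:MPinv1}.(\ref{lem:MPdagidem}) gives $(ff^\dagger)^\circ = {f^\dagger}^\circ f^\circ$, and part $(ii)$ of that lemma identifies ${f^\dagger}^\circ = {f^\circ}^\dagger$, so $p = ff^\dagger$ is M-P invertible with $p^\circ = {f^\circ}^\dagger f^\circ$, which is visibly of the form $gg^\dagger$ for $g = {f^\circ}^\dagger$ and hence positive.

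The real work is in $(vi)$, which I expect to be the main obstacle, because here $f$ itself is not assumed to be well behaved: we are given only that the positive map $p = ff^\dagger$ is M-P invertible and that $pp^\circ f = f$. Setting $f^\circ := f^\dagger p^\circ$, I would verify the axioms by substituting $ff^\dagger = p$ repeatedly. Then \textbf{[MP.1]} reads $ff^\dagger p^\circ f = pp^\circ f = f$, which is exactly the hypothesis; \textbf{[MP.2]} reads $f^\dagger p^\circ p p^\circ = f^\dagger p^\circ$, using $p^\circ p p^\circ = p^\circ$; and \textbf{[MP.3]} reduces to self-adjointness of $pp^\circ$. The delicate point is \textbf{[MP.4]}, which requires that $f^\dagger p^\circ f$ be self-adjoint, and this in turn needs $(p^\circ)^\dagger = p^\circ$. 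To obtain this I would note that positivity forces $p^\dagger = (gg^\dagger)^\dagger = p$, so $p$ is self-adjoint, and then invoke Lemma~\ref{lem:MPinv1}.(\ref{lem:MPselfadjoint}), which guarantees that the M-P inverse of a self-adjoint map is again self-adjoint. With $(p^\circ)^\dagger = p^\circ$ in hand, $(f^\dagger p^\circ f)^\dagger = f^\dagger (p^\circ)^\dagger f = f^\dagger p^\circ f$, establishing \textbf{[MP.4]} and completing the verification.
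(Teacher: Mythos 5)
Your proof is correct, and since the paper leaves this lemma as an exercise, your direct verification of \textbf{[MP.1]}--\textbf{[MP.4]} (reducing $(iii)$ and $(iv)$ to the partial-isometry case, deriving $(v)$ from Lemma~\ref{lem:MPinv1}, and using self-adjointness of $p^\circ$ via Lemma~\ref{lem:MPinv1}.(\ref{lem:MPselfadjoint}) to settle \textbf{[MP.4]} in $(vi)$) is exactly the intended argument. No gaps.
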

\begin{proof} These are straightforward to check, so we leave them as an exercise for the reader.     \end{proof}

It is important to note that, in general, Moore-Penrose inverses are not compatible with composition. Indeed, even if $f$ and $g$ have M-P inverses, $fg$ might not have a M-P inverse and, even if it does, $(fg)^\circ$ is not necessarily equal to $g^\circ f^\circ$. Here are some conditions for when $(fg)^\circ = g^\circ f^\circ$ holds: 

\begin{lemma}\label{lem:MPcomp} In a dagger category $(\mathbb{X}, \dagger)$, if $f: A\to B$ and $g: B \to C$ are M-P invertible then:  
\begin{enumerate}[{\em (i)}]
\item $fg$ is M-P invertible with $(fg)^\circ = g^\circ f^\circ$
if and only if $f^\circ f g g^\circ$ and $g g^\circ f^\circ f$ are idempotent, and both
$f g g^\circ f^\circ = f^{\circ\dagger} g g^\circ f^\dagger$ and $g^\circ f^\circ f g = g^\dagger f^\circ f g^{\circ\dagger}$;
\item The following conditions\footnote{For complex matrices, the conditions of {\em (ii)} are equivalent to $(fg)^\circ = g^\circ f^\circ$ \cite[Sec 1.4 \& 1.5]{campbell2009generalized}. However, for general dagger categories it appears that the conditions in {\em (ii)} are sufficient -- but not necessary -- to obtain $(fg)^\circ = g^\circ f^\circ$.} are equivalent and imply $(fg)^\circ = g^\circ f^\circ$:
\begin{enumerate}[(a)]
\item $gg^\circ f^\circ f$, $fg g^\circ f^\circ$ and $g^\circ f^\circ fg$ are self-dual;
\item $gg^\dagger f^\circ f$ and $f^\dagger f g g^\circ$ are self-dual;
\item $f^\circ f g g^\dagger f^\dagger = gg^\dagger f^\dagger$ and $gg^\circ f^\dagger f g= f^\dagger f g$.
\end{enumerate}
\end{enumerate}
\end{lemma}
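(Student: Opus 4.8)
The plan is to verify, in both parts, the defining identities \textbf{[MP.1]}--\textbf{[MP.4]} for the candidate $g^\circ f^\circ$ as a Moore-Penrose inverse of $fg$. Since Moore-Penrose inverses are unique, $fg$ is M-P invertible with $(fg)^\circ = g^\circ f^\circ$ if and only if $g^\circ f^\circ$ satisfies \textbf{[MP.1]}--\textbf{[MP.4]} with respect to $fg$. Throughout I abbreviate $p := f^\circ f$ and $q := gg^\circ$, which are $\dagger$-idempotents by Lemma \ref{lem:MPinv1}.(\ref{lem:MPfidem}), and I use the regularity identities $ff^\circ f = f$, $f^\circ f f^\circ = f^\circ$, $gg^\circ g = g$, $g^\circ g g^\circ = g^\circ$ freely. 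Unwinding the four axioms for $g^\circ f^\circ$, \textbf{[MP.1]} becomes $fg\,g^\circ f^\circ fg = fg$, \textbf{[MP.2]} becomes $g^\circ f^\circ fg\,g^\circ f^\circ = g^\circ f^\circ$, while \textbf{[MP.3]} and \textbf{[MP.4]} are the self-adjointness of $fg\,g^\circ f^\circ$ and of $g^\circ f^\circ fg$.

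For part (i) I would prove four separate equivalences, one per axiom. For \textbf{[MP.3]}, note $(fg\,g^\circ f^\circ)^\dagger = f^{\circ\dagger}(gg^\circ)^\dagger f^\dagger = f^{\circ\dagger} gg^\circ f^\dagger$ because $gg^\circ$ is self-adjoint, so \textbf{[MP.3]} is literally the equation $fg\,g^\circ f^\circ = f^{\circ\dagger} gg^\circ f^\dagger$; \textbf{[MP.4]} is handled symmetrically using that $f^\circ f$ is self-adjoint. The crux is \textbf{[MP.1]} and \textbf{[MP.2]}, and the key observation is the identity $fg\,g^\circ f^\circ fg = f(pq)^2 g$, obtained by expanding the outer factors as $f = ff^\circ f$ and $g = gg^\circ g$ and regrouping. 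Since also $f(pq)g = (ff^\circ f)(gg^\circ g) = fg$, this reduces \textbf{[MP.1]} to $f(pq)^2 g = f(pq)g$: idempotency of $pq = f^\circ f gg^\circ$ collapses $(pq)^2$ to $pq$ and gives sufficiency, while necessity follows by precomposing with $f^\circ$ and postcomposing with $g^\circ$ to extract $(pq)^2 = pq$. An entirely parallel argument, expanding $g^\circ = g^\circ gg^\circ$ and $f^\circ = f^\circ ff^\circ$, shows \textbf{[MP.2]} is equivalent to idempotency of $qp = gg^\circ f^\circ f$. Assembling the four equivalences yields the stated biconditional.

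For part (ii) I would first show that condition (a) implies the four conditions of part (i), so that $(fg)^\circ = g^\circ f^\circ$ follows from (i). Indeed, self-adjointness of $gg^\circ f^\circ f = qp$ forces $qp = (qp)^\dagger = pq$, i.e.\ $p$ and $q$ commute; then $(pq)^2 = p(qp)q = p(pq)q = p^2q^2 = pq$ and likewise $(qp)^2 = qp$, so both composite projections are idempotent, while the remaining two self-adjointness statements in (a) are exactly the conditions matching \textbf{[MP.3]} and \textbf{[MP.4]} from (i). It then remains to prove the equivalences (a) $\Leftrightarrow$ (b) $\Leftrightarrow$ (c). The plan here is to rewrite each self-adjointness statement and each absorption equation as a relation among the self-adjoint maps $f^\circ f$, $gg^\circ$, $f^\dagger f$, $gg^\dagger$, converting systematically between the $(\_)^\circ$ and $(\_)^\dagger$ forms using the identities of Lemma \ref{lem:MPinv1} -- chiefly $ff^\circ = f^{\circ\dagger}f^\dagger$, $f^\dagger = f^\circ f f^\dagger$ (Lemma \ref{lem:MPinv1}.(\ref{lem:MPfdageq})), and their $g$-analogues -- together with absorption facts such as $gg^\dagger = gg^\circ\,gg^\dagger = gg^\dagger\,gg^\circ$, which follow from $gg^\circ g = g$. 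I expect this last bookkeeping -- tracking how the positive maps $f^\dagger f$ and $gg^\dagger$ interact with the projections $f^\circ f$ and $gg^\circ$ -- to be the main obstacle, since it is where the three superficially different formulations are reconciled; by contrast, the reductions underlying (i) and the passage from (a) to $(fg)^\circ = g^\circ f^\circ$ are the clean equational manipulations described above.
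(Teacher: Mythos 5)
The paper offers no proof of this lemma beyond the remark that it ``can be checked by lengthy and brute-force calculations,'' so there is nothing to compare against except correctness; your organization of the brute force is sound and I could verify every step you actually carry out. In part (i) the reduction of \textbf{[MP.1]} to $f(pq)^2g = f(pq)g$ and hence to $(pq)^2=pq$ (extracting idempotency by pre-/post-composing with $f^\circ$ and $g^\circ$ and using $p^2=p$, $q^2=q$) is correct, as is the dual reduction of \textbf{[MP.2]} to $(qp)^2=qp$ and the identification of \textbf{[MP.3]}, \textbf{[MP.4]} with the two displayed self-adjointness equations. In part (ii), your derivation of the conditions of (i) from (a) is complete. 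The only place your plan is not yet a proof is the equivalence (a)$\Leftrightarrow$(b)$\Leftrightarrow$(c), which you defer to ``bookkeeping''; this does go through with exactly the tools you name, but one direction needs a specific trick worth recording. The cycle (b)$\Leftrightarrow$(c) is routine: post-compose the commutation relations of (b) with $f^\dagger$ resp.\ $g$ and use $f^\circ f f^\dagger = f^\dagger$ and $gg^\circ g = g$ (Lemma \ref{lem:MPinv1}.(\ref{lem:MPfdageq})), and conversely post-compose the equations of (c) with $f^{\dagger\circ}$ resp.\ $g^\circ$ and observe that the resulting left-hand sides are manifestly self-adjoint sandwiches. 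Likewise (a)$\Rightarrow$(b) follows by conjugating the second and third conditions of (a) by $g,g^\dagger$ resp.\ $f^\dagger,f$ and absorbing. The delicate step is recovering the self-adjointness of $gg^\circ f^\circ f$ from (b): commutation of $f^\circ f$ with $gg^\dagger$ does \emph{not} formally transfer to commutation with $gg^\circ$ by substitution alone. Instead, first derive the self-adjointness of $fgg^\circ f^\circ$ from (b), post-compose with $f$ and pre-compose with $f^\circ$, and use $f^\circ f^{\circ\dagger} = (f^\dagger f)^\circ$, $(f^\dagger f)^\circ(f^\dagger f) = f^\circ f$ together with the second condition of (b) to obtain $f^\circ f\, gg^\circ\, f^\circ f = f^\circ f\, gg^\circ$; since the left-hand side is automatically self-adjoint, so is $f^\circ f\, gg^\circ$, which is the first condition of (a). With that supplement your proposal is a complete proof.
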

\begin{proof} These can be checked by lengthy and brute-force calculations.  \end{proof}

Here are some examples of Moore-Penrose dagger categories, as well as some non-examples but where we can still fully characterize the M-P invertible maps: 

\begin{example}\label{ex:MPmatC} Let $\mathbb{C}$ be the field of complex numbers and let $\mathsf{MAT}(\mathbb{C})$ be the category whose objects are natural numbers $n \in \mathbb{N}$ and where a map $A: n \to m$ is an $n \times m$ complex matrix. $(\mathsf{MAT}(\mathbb{C}), \dagger)$ is a dagger category where $\dagger$ is the conjugate transpose operator, $A^\dagger(i,j) = \overline{A(j,i)}$. Furthermore, $(\mathsf{MAT}(\mathbb{C}), \dagger)$ is also a Moore-Penrose dagger category where the M-P inverse of a matrix can be constructed from its singular value decomposition (SVD). For a $n \times m$ $\mathbb{C}$-matrix $A$, let $d_1, \hdots, d_k$ be the non-zero singular values of $A$, so $d_i \in \mathbb{R}$ with $d_i >0$, and $k \leq \mathsf{min}(n,m)$. Then there exists a unitary $n \times n$ matrix $U$ and a unitary $m \times m$ matrix $V$ such that: 
\begin{align*}
    A = U \begin{bmatrix} D & 0 \\
    0 & 0 
    \end{bmatrix}_{n \times m} V^\dagger && \text{where $D$ is the diagonal $k \times k$ matrix } D= \begin{bmatrix}
    d_{1} & \hdots & 0 \\
   \vdots  & \ddots & \vdots \\
   0 & \hdots & d_{k}
  \end{bmatrix}
\end{align*}
Then the M-P inverse of $A$ is the $m \times n$ matrix $A^\circ$ defined as follows: 
\begin{align*}
    A^\circ = V \begin{bmatrix} D^{\text{-}1} & 0 \\
    0 & 0 
    \end{bmatrix}_{m \times n}  U^\dagger && \text{where $D^{\text{-}1}$ is the diagonal $k \times k$ matrix } D^{\text{-}1} = \begin{bmatrix}
    \frac{1}{d_{1}} & \hdots & 0 \\
   \vdots  & \ddots & \vdots \\
   0 & \hdots & \frac{1}{d_{k}} 
  \end{bmatrix}
\end{align*}
Since M-P inverses are unique, the construction does not depend on the choice of SVD.
\end{example}

\begin{example}\label{ex:MPmatCT}  On the other hand, $\mathsf{MAT}(\mathbb{C})$ has another dagger given instead simply by the transpose operator, $A^\mathsf{T}(i,j) = A(j,i)$. However, the dagger category $(\mathsf{MAT}(\mathbb{C}), \mathsf{T})$ is not Moore-Penrose. For example, the matrix $\begin{bmatrix} i & 1 \end{bmatrix}$ does not have a M-P inverse with respect to the transpose. If it did, one can obtain the contradiction that $i=0$, which we leave as an exercise for the reader. 
\end{example}

\begin{example}\label{ex:mat*} Recall that an involutive ring is a ring $R$ equipped with a unary operation $\ast$, called the involution, such that $(x+y)^\ast = x^\ast + y^\ast$, and $(xy)^\ast = y^\ast x^\ast$, and ${x^\ast}^\ast = x$. Let $\mathsf{MAT}(R)$ be the category of matrices over $R$, that is, the category whose objects are natural numbers $n \in \mathbb{N}$ and where a map $A: n \to m$ is an $n \times m$ matrix $A$ with coefficients in $R$. Then $(\mathsf{MAT}(R), \dagger)$ is a dagger category where $\dagger$ is given by the involution transpose operator, that is, $A^\dagger(i,j) = A(j,i)^\ast$. In general $(\mathsf{MAT}(R), \dagger)$ will not necessarily be Moore-Penrose. However, in certain cases, it is possible to precisely characterize which $R$-matrices do have a M-P inverse. For example, if $R$ is an involutive field, then an $R$-matrix $A$ has a M-P inverse if and only if $\mathsf{rank}(AA^\dagger)=\mathsf{rank}(A) = \mathsf{rank}(A^\dagger A)$ \cite[Thm 1]{pearl1968generalized}. Necessary and sufficient conditions for when an $R$-matrix has a M-P inverse have also been described in the case when $R$ is an integral domain \cite{bapat1990generalized}, a commutative ring \cite{bapat1992moore}, or even a semi-simple artinian ring \cite{huylebrouck1984moore}.
\end{example}

\begin{example} Let $\mathsf{HILB}$ be the category of (complex) Hilbert spaces and bounded linear operators between them. Then $(\mathsf{HILB}, \dagger)$ is a dagger category where the dagger is given by the adjoint, that is, for a bounded linear operator $f: H_1 \to H_2$, $f^\dagger: H_2 \to H_1$ is the unique bounded linear operator such that $\langle f(x) \vert y \rangle  =\langle x \vert f^\dagger(y) \rangle$ for all $x \in H_1$ and $y \in H_2$. $(\mathsf{HILB}, \dagger)$ is not Moore-Penrose but there is a characterization of the M-P invertible maps: a bounded linear operator is M-P invertible if and only if its range is closed \cite[Thm 2.4]{hagen2000c}. Explicitly, for a bounded linear map $f: H_1 \to H_2$, let $\mathsf{Ker}(f) \subseteq H_1$ be its kernel and $\mathsf{im}(f) \subseteq H_2$ be its range, and let $\mathsf{Ker}(f)^\perp$ and $\mathsf{im}(f)^\perp$ be their orthogonal complements. If $\mathsf{im}(f)$ is closed, then we have that $H_2 = \mathsf{im}(f) \oplus \mathsf{im}(f)^\perp$ and also that $f\vert_{\mathsf{Ker}(f)^\perp}: \mathsf{Ker}(f)^\perp \to \mathsf{im}(f)$ is a bounded linear isomorphism. Then define the M-P inverse $f^\circ: H_2 \to H_1$ as $f^\circ(y) = f^{-1}\vert_{\mathsf{Ker}(f)^\perp}(y)$ for $y \in \mathsf{im}(f)$ and $f^\circ(y) = 0$ for $y \in \mathsf{im}(f)^\perp$. For more details, see \cite[Ex 2.16]{hagen2000c}. Now let $\mathsf{FHILB}$ be the subcategory of finite dimensional Hilbert spaces. Then $(\mathsf{FHILB}, \dagger)$ is also a dagger category and it is well known that $(\mathsf{FHILB}, \dagger) \simeq (\mathsf{MAT}(\mathbb{C}), \dagger)$. As such, $(\mathsf{FHILB}, \dagger)$ is also a Moore-Penrose dagger category where we this time use SVD on linear operators to construct the M-P inverse. So let $H_1$ be a Hilbert space of dimension $n$ and $H_2$ a Hilbert space of dimension $m$. Then for any linear operator $f: H_1 \to H_2$, if $d_1, \hdots, d_k \in \mathbb{R}$ are the non-zero singular values of $f$ (so $k \leq \mathsf{min}(n,m)$), then there exists orthonormal bases $u_i \in H_1$ and $v_j \in H_2$ such that $f(x) = \sum^k_{i=1} d_i \langle u_i \vert x \rangle v_i$ for all $x \in H_1$. Then $f^\circ: H_2 \to H_1$ is defined as follows $f^\circ(y) := \sum^k_{i=1} \frac{1}{d_i} \langle v_i \vert y \rangle u_i$.  
\end{example}

\begin{example} \label{ex:field} Any field gives a simple example of a Moore-Penrose dagger category. So let $k$ be a field, and let $\bullet_k$ be the category with one object and whose maps are elements of $k$, where composition is given by the multiplication and the identity map is the unit of $k$. Then $(\bullet_k, \dagger)$ is a Moore-Penrose dagger category where for all $x \in k$, $x^\dagger = x$ and $x^\circ = x^{\text{-}1}$ if $x \neq 0$ or $x^\circ =0$ if $x=0$. In fact, a Moore-Penrose dagger category with only one object is precisely a $\ast$-regular monoid \cite{drazin1979regular}. 
\end{example}

\begin{example} \label{ex:REL} Let $\mathsf{REL}$ be the category of sets and relations, that is, the category whose objects are sets and where a map $R: X \to Y$ is a subset $R \subseteq X \times Y$. $(\mathsf{REL}, \dagger)$ is a dagger category where $\dagger$ is given by the converse relation, that is, $(y,x) \in R^\dagger \subseteq Y \times X$ if and only if $(x,y) \in R \subseteq X \times Y$. While $(\mathsf{REL}, \dagger)$ is not a Moore-Penrose dagger category, it turns out that the M-P invertible maps are precisely the partial isometries (which recall by Lemma \ref{lem:MPmaps}.(\ref{lem:MPparisom}) always have M-P inverses). A partial isometry in $(\mathsf{REL}, \dagger)$ is a difunctional relation \cite[Def 1]{gumm2014coalgebraic}, which is a relation $R \subseteq X \times Y$ which satisfies that if $(x,b), (a,b)$ and $(a,y) \in R$, then $(x,y) \in R$. It was previously observed that a relation between \emph{finite} sets has M-P inverse if and only if it was a difunctional relation/partial isometry -- since relations between finite sets correspond to Boolean matrices, and Boolean matrices with M-P inverses were fully characterized in \cite[Thm 4.3]{rao1975generalized}. From this, it is not difficult to see that this can be extended to relations between arbitrary sets. Thus, in $(\mathsf{REL}, \dagger)$, $R \subseteq X \times Y$ has a M-P inverse if and only if $R$ is a difunctional relation/partial isometry, which in this case means that the M-P inverse is the converse relations $R^\circ = R^\dagger \subseteq Y \times X$. In fact, the same is true for \emph{allegories}. Briefly, an \textbf{allegory} \cite[Chap 2]{freyd1990categories} is a dagger category $(\mathbb{X}, \dagger)$ which is poset enriched and has meets, so in particular each homset $\mathbb{X}(A,B)$ is a poset with order $\leq$ and binary meets $\cap$, and such that the modular law $fg \cap h \leq (f \cap hg^\dagger)g$ holds. Well-known examples of allegories include $(\mathsf{REL}, \dagger)$ and more generally the category of relations of a regular category \cite[Sec 2.111]{freyd1990categories}. From the modular law, it follows that every map $f$ in an allegory $(\mathbb{X}, \dagger)$ satisfies $f \leq f f^\dagger f$ \cite[Sec 2.112]{freyd1990categories}. Therefore, if $f$ has a M-P inverse, using Lemma \ref{lem:MPinv1}.(\ref{lem:MPfcirceq}) and (\ref{lem:MPfdageq}), we easily compute that: 
\[ f^\dagger = f^\circ f f^\dagger \leq f^\circ {f^\circ}^\dagger f^\circ  f f^\dagger = f^\circ  f^{\circ \dagger} f^\dagger = f^\circ \]
\[ f^\circ = f^\circ {f^\circ}^\dagger f^\dagger \leq f^\circ {f^\circ}^\dagger f^\dagger f f^\dagger =  f^\circ f f^\dagger = f^\dagger.\]
So we conclude that $f^\circ = f^\dagger$, and so by Lemma \ref{lem:MPinv1}.(\ref{lem:MPprisom2}), $f$ is a partial isometry. Thus, a map $f$ in an allegory $(\mathbb{X}, \dagger)$ has a M-P inverse if and only if $f$ is a partial isometry, which means that its M-P inverse is its adjoint $f^\circ = f^\dagger$. 
\end{example}

\begin{example}\label{ex:daggergroupoid} A \textbf{dagger groupoid} is a dagger category $(\mathbb{X}, \dagger)$ where every map in $\mathbb{X}$ is an isomorphism (though not necessarily a unitary). Every dagger groupoid $(\mathbb{X}, \dagger)$ is a Moore-Penrose dagger category where $f^\circ = f^{\text{-}1}$. In particular, from any dagger category, we can always construct a dagger groupoid via its subcategory of isomorphisms. So for any category $\mathbb{X}$, let $\mathbb{X}_{\mathsf{iso}}$ be the subcategory of isomorphisms of $\mathbb{X}$. If $(\mathbb{X}, \dagger)$ is a dagger category, then $(\mathbb{X}_{\mathsf{iso}}, \dagger)$ is a dagger groupoid since if $f$ is an isomorphism, then so is $f^\dagger$ with inverse ${f^\dagger}^{\text{-}1} := {f^{\text{-}1}}^\dagger$. Therefore $(\mathbb{X}_{\mathsf{iso}}, \dagger)$ is a Moore-Penrose dagger category. 
\end{example}
 
\begin{example}\label{ex:inverse} An \textbf{inverse category} \cite[Sec 2.3.2]{cockett2002restriction} is a dagger category $(\mathbb{X}, \dagger)$ where $ff^\dagger f = f$ for all maps $f$ and $ff^\dagger gg^\dagger = g g^\dagger ff^\dagger$ for all parallel maps $f$ and $g$. Inverse categories play an important role in the theory of restriction categories \cite{cockett2002restriction}, since the subcategory of partial isomorphisms of a restriction category is an inverse category. Every inverse category $(\mathbb{X}, \dagger)$ is a Moore-Penrose dagger category where the M-P inverse of $f$ is its adjoint $f^\circ = f^\dagger$ (since every map in an inverse category is a partial isometry by definition). So in particular, for any restriction category, its subcategory of partial isomorphisms is a Moore-Penrose dagger category. As a concrete example, let $\mathsf{PINJ}$ be the category of sets and partial injections, which is the subcategory of partial isomorphisms of the restriction category of sets and partial functions. Then $(\mathsf{PINJ}, \dagger)$ is an inverse category where for a partial injection $f: X \to Y$, $f^\dagger: Y \to X$ is defined as $f^\dagger(y) = x$ if $f(x)=y$ and is undefined otherwise. 
\end{example}

\begin{example} If $(\mathbb{X}_1, \dagger_1)$ and $(\mathbb{X}_2, \dagger_2)$ are both Moore-Penrose dagger categories, then their product $(\mathbb{X}_1 \times \mathbb{X}_2, \dagger_1 \times \dagger_2)$ is also a Moore-Penrose dagger category. In particular, we can combine Example \ref{ex:field} and Example \ref{ex:inverse}. So if $(\mathbb{X}, \dagger)$ is an inverse category and $k$ is a field, let $\mathbb{X}_k$ be the category whose objects are those of $\mathbb{X}$ but whose maps are pairs $(f,x)$ consisting of a map $f$ in $\mathbb{X}$ and an element $x \in k$, so we may think of $x$ as adding a weight or a cost to $f$. Then $(\mathbb{X}_k, \dagger)$ is a Moore-Penrose dagger category where $(f,x)^\dagger = (f^\dagger,x)$ and $(f,x)^\circ= (f^\dagger, x^\circ)$. 
\end{example}

\section{Compact Singular Value Decomposition}\label{sec:CSVD}

In Example \ref{ex:MPmatC}, we explained how to construct the M-P inverse of a complex matrix using SVD. However, there is an alternative way to construct the M-P inverse using \emph{compact} singular value decomposition (CSVD). This decomposition tells us that for any $n\times m$ complex matrix, $A$, again with singular values $d_1,\hdots,d_k$ and associated diagonal matrix $D$, there exists an $n\times k$ matrix $R$ and an $m \times k$ matrix $S$ such that $A=RDS^\dagger$ and $R^\dagger R= S^\dagger S= I_k$. The decomposition allows one to construct the M-P inverse as $A^\circ := S D^{-1} R^\dagger$. In dagger categorical terms, $R$ and $S$ are coisometries, and $D$ is an isomorphism\footnote{The fact that $D$ is a diagonal matrix of singular values is not relevant to this way of constructing the M-P inverse.}. Thus, generalized CSVD in an arbitrary dagger category is a factorization into a coisometry, followed by an isomorphism, followed by an isometry. We shall discuss the generalization of CSVD for dagger categories before discussing SVD because generalizing SVD requires dagger biproducts and dagger kernels, while generalizing CSVD can be explained without introducing further structure.  

This generalized CSVD not only provides a simple way of computing M-P inverses, but is also directly related to the splitting of dagger idempotents, an important dagger category concept that was introduced by Selinger in \cite{selinger2008idempotents}. Generalized CSVD allows us to precisely characterize the M-P invertible maps in dagger categories which are dagger idempotent complete. Furthermore, the dagger idempotent splitting completion leads us to an important reinterpretation of M-P inverses as being {\em actual\/} inverses between dagger idempotents. As such, we begin this section by discussing the relationship between M-P inverses and dagger idempotent splitting. 

\begin{definition} \cite[Def 3.6]{selinger2008idempotents} In a dagger category $(\mathbb{X}, \dagger)$, a \textbf{dagger idempotent} $e: A \to A$ is an idempotent which is self-adjoint, $ee = e = e^\dagger$.  A dagger idempotent is said to \textbf{$\dagger$-split} if there exists a map $r: A \to X$ such that $rr^\dagger = e$ and $r^\dagger r=1_X$ (so $r$ is a coisometry). A \textbf{dagger idempotent complete category} is a dagger category $(\mathbb{X}, \dagger)$ such that all $\dagger$-idempotents $\dagger$-split. 
\end{definition}

In Lemma \ref{lem:MPinv1}.(\ref{lem:MPfidem}), we saw that in any dagger category $(\mathbb{X}, \dagger)$, if a map $f$ has a M-P inverse, then $ff^\circ$ and $f^\circ f$ were both $\dagger$-idempotents. As such, we may ask these $\dagger$-idempotents to also be $\dagger$-split: 

\begin{definition} In a dagger category $(\mathbb{X}, \dagger)$, a map $f$ is \textbf{Moore-Penrose split} (M-P split) if $f$ has a M-P inverse $f^\circ$ and the $\dagger$-idempotents $f f^\circ$ and $f^\circ f$ $\dagger$-split. A Moore-Penrose category in which all maps are M-P split is said to be \textbf{Moore-Penrose complete}. 
\end{definition}

A dagger category which is Moore-Penrose complete is the same thing as a Moore-Penrose category in which {\em all\/} dagger idempotents split:  

\begin{proposition} A dagger category $(\mathbb{X}, \dagger)$ is Moore-Penrose complete if and only if $(\mathbb{X}, \dagger)$ is dagger idempotent complete and Moore-Penrose.
\end{proposition}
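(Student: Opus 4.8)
The plan is to prove the two implications separately, with both relying on the elementary fact, recorded in Lemma \ref{lem:MPmaps}.(\ref{lem:MPidem}), that a $\dagger$-idempotent is its own M-P inverse: for a $\dagger$-idempotent $e$ we have $e^\circ = e$, and hence $ee^\circ = e^\circ e = e$. This observation is what links the M-P splitting condition to ordinary $\dagger$-splitting.

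For the implication from right to left, I would suppose that $(\mathbb{X}, \dagger)$ is dagger idempotent complete and Moore-Penrose, and show that every map is M-P split. Since the category is Moore-Penrose, an arbitrary map $f$ has a M-P inverse $f^\circ$, and by Lemma \ref{lem:MPinv1}.(\ref{lem:MPfidem}) both $ff^\circ$ and $f^\circ f$ are $\dagger$-idempotents. Because the category is dagger idempotent complete, these two $\dagger$-idempotents $\dagger$-split, so $f$ is M-P split directly from the definition. As $f$ was arbitrary, every map is M-P split, and therefore $(\mathbb{X}, \dagger)$ is Moore-Penrose complete.

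For the converse, I would suppose that $(\mathbb{X}, \dagger)$ is Moore-Penrose complete. Since the definition of Moore-Penrose complete already presupposes being Moore-Penrose, it remains only to establish dagger idempotent completeness, that is, that every $\dagger$-idempotent $e: A \to A$ $\dagger$-splits. Here I would apply M-P completeness to the map $e$ itself: being M-P split means in particular that the $\dagger$-idempotent $ee^\circ$ $\dagger$-splits. But by Lemma \ref{lem:MPmaps}.(\ref{lem:MPidem}) we have $e^\circ = e$, so $ee^\circ = e$, and thus it is $e$ itself that $\dagger$-splits. Since $e$ was an arbitrary $\dagger$-idempotent, the category is dagger idempotent complete.

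The argument is short, and there is no genuine obstacle beyond unwinding the definitions correctly. The one point to get right is the identification $ee^\circ = e$ for a $\dagger$-idempotent $e$, which is exactly what collapses the (a priori stronger-looking) M-P splitting condition for $e$ into the ordinary $\dagger$-splitting condition for $e$, thereby forcing dagger idempotent completeness out of Moore-Penrose completeness.
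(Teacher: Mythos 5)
Your proof is correct and follows essentially the same route as the paper: the forward direction hinges on Lemma \ref{lem:MPmaps}.(\ref{lem:MPidem}) giving $e^\circ = e$, hence $ee^\circ = e$, so M-P splitness of a $\dagger$-idempotent collapses to ordinary $\dagger$-splitting, exactly as in the paper's argument. The only difference is that you spell out the converse direction, which the paper dismisses as immediate from the definitions.
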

\begin{proof} The $\Leftarrow$ direction is immediate by definition. For the $\Rightarrow$ direction, suppose that $(\mathbb{X}, \dagger)$ is Moore-Penrose complete. By definition, this means every map has a M-P inverse, so $(\mathbb{X}, \dagger)$ is indeed Moore-Penrose. Now let $e: A \to A$ be a $\dagger$-idempotent. By Lemma \ref{lem:MPmaps}.(\ref{lem:MPidem}), $e$ is its own M-P inverse, so $e^\circ = e$, and therefore $e^\circ e = e = e e^\circ$. However, by assumption, $e$ is M-P split, which therefore implies that $e$ is $\dagger$-split. So $(\mathbb{X}, \dagger)$ is indeed $\dagger$-idempotent complete.
\end{proof}

We will now explain how every Moore-Penrose dagger category embeds into a Moore-Penrose complete dagger category. Let us first review how every dagger category embeds into a dagger idempotent complete category via the dagger version of the idempotent splitting completion, also called the dagger Karoubi envelope \cite[Def 3.13]{selinger2008idempotents}. So for a dagger category $(\mathbb{X}, \dagger)$, define the dagger category $({\sf Split}_\dagger(\mathbb{X}), \dagger)$ whose objects are pairs $(A,e)$ consisting of an object $A$ and a $\dagger$-idempotent $e: A \to A$ in $(\mathbb{X}, \dagger)$, and whose maps $f: (A_1,e_1) \to (A_2, e_2)$ in are maps $f: A_1 \to A_2$ in $\mathbb{X}$ such that $e_1fe_2 = f$ (or equivalently $e_1f = f= fe_2$). Composition in ${\sf Split}_\dagger(\mathbb{X})$ is defined as in $\mathbb{X}$, while identity maps $1_{(A,e)}: (A,e) \to (A,e)$ are defined as $1_{(A,e)} := e$. Lastly, the dagger of $({\sf Split}_\dagger(\mathbb{X}), \dagger)$ is defined as in $(\mathbb{X}, \dagger)$, and furthermore $({\sf Split}_\dagger(\mathbb{X}), \dagger)$ is a dagger idempotent complete category \cite[Prop 3.12]{selinger2008idempotents}. There is also an embedding $\mathcal{I}: (\mathbb{X}, \dagger) \to ({\sf Split}_\dagger(\mathbb{X}), \dagger)$  which is defined on objects as $\mathcal{I}(A) = (A, 1_A)$ and on maps as $\mathcal{I}(f)=f$. 

\begin{lemma}\label{lem:Karoubi} Let $(\mathbb{X}, \dagger)$ be a Moore-Penrose dagger category. Then $({\sf Split}_\dagger(\mathbb{X}), \dagger)$ is a Moore-Penrose complete category. 
\end{lemma}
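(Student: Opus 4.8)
The plan is to invoke the Proposition characterizing Moore-Penrose completeness, which reduces the claim to showing that $({\sf Split}_\dagger(\mathbb{X}), \dagger)$ is both dagger idempotent complete and Moore-Penrose. The first property is already in hand, since $({\sf Split}_\dagger(\mathbb{X}), \dagger)$ is dagger idempotent complete by \cite[Prop 3.12]{selinger2008idempotents}. Thus the entire content of the lemma is to verify that every map of ${\sf Split}_\dagger(\mathbb{X})$ has a M-P inverse.

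To this end, fix a map $f \colon (A_1, e_1) \to (A_2, e_2)$, which by definition is a map $f \colon A_1 \to A_2$ of $\mathbb{X}$ satisfying $e_1 f = f = f e_2$. Since $(\mathbb{X}, \dagger)$ is Moore-Penrose, $f$ has a M-P inverse $f^\circ \colon A_2 \to A_1$ in $\mathbb{X}$, and my candidate for the M-P inverse of $f$ in ${\sf Split}_\dagger(\mathbb{X})$ is simply $f^\circ$ itself. The key observation is that the axioms \textbf{[MP.1]}--\textbf{[MP.4]} make no reference to identity maps, and that composition and the dagger in ${\sf Split}_\dagger(\mathbb{X})$ are computed exactly as in $\mathbb{X}$. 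Hence $f^\circ$ automatically satisfies \textbf{[MP.1]}--\textbf{[MP.4]} with respect to $f$ in ${\sf Split}_\dagger(\mathbb{X})$ as soon as it is recognized as a \emph{legitimate} map of ${\sf Split}_\dagger(\mathbb{X})$.

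The main obstacle, and indeed the only genuine step, is therefore to check that $f^\circ$ is a legitimate map $(A_2, e_2) \to (A_1, e_1)$, i.e.\ that $e_2 f^\circ e_1 = f^\circ$. Taking daggers of $e_1 f = f = f e_2$ gives $f^\dagger e_1 = f^\dagger = e_2 f^\dagger$. Combining these with the two expressions for $f^\circ$ furnished by Lemma \ref{lem:MPinv1}.(\ref{lem:MPfcirceq}), namely $f^\circ = f^\circ {f^\dagger}^\circ f^\dagger$ and $f^\circ = f^\dagger {f^\dagger}^\circ f^\circ$, I would compute $f^\circ e_1 = f^\circ {f^\dagger}^\circ f^\dagger e_1 = f^\circ {f^\dagger}^\circ f^\dagger = f^\circ$ and $e_2 f^\circ = e_2 f^\dagger {f^\dagger}^\circ f^\circ = f^\dagger {f^\dagger}^\circ f^\circ = f^\circ$, whence $e_2 f^\circ e_1 = f^\circ$ as required.

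With $f^\circ$ recognized as a map of ${\sf Split}_\dagger(\mathbb{X})$ satisfying the four axioms, it is the M-P inverse of $f$ there, so ${\sf Split}_\dagger(\mathbb{X})$ is Moore-Penrose. Together with dagger idempotent completeness and the Proposition above, this yields that $({\sf Split}_\dagger(\mathbb{X}), \dagger)$ is Moore-Penrose complete. The only subsidiary point to keep in mind is the equivalence $e_1 f e_2 = f$ with $e_1 f = f = f e_2$ (already noted in the construction of ${\sf Split}_\dagger(\mathbb{X})$), which follows from the idempotency and self-adjointness of $e_1$ and $e_2$; every remaining step is a direct substitution using Lemma \ref{lem:MPinv1}.
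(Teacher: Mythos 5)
Your proposal is correct and takes essentially the same approach as the paper: the whole content in both cases is the verification that $f^\circ$ is a legitimate map $(A_2,e_2) \to (A_1,e_1)$ of ${\sf Split}_\dagger(\mathbb{X})$, i.e.\ $e_2 f^\circ e_1 = f^\circ$, proved via Lemma \ref{lem:MPinv1}.(\ref{lem:MPfcirceq}) together with the fact that $f^\dagger$ is itself a map of ${\sf Split}_\dagger(\mathbb{X})$. The only cosmetic difference is that you invoke the Proposition reducing Moore-Penrose completeness to ``dagger idempotent complete plus Moore-Penrose'' explicitly, whereas the paper leaves that reduction implicit (having already recorded dagger idempotent completeness of the splitting from Selinger's result).
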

\begin{proof} Let $f: (A,e) \to (B, e')$ be a map in $({\sf Split}_\dagger(\mathbb{X}), \dagger)$. Since composition and the dagger of $({\sf Split}_\dagger(\mathbb{X}), \dagger)$ are the same as in $(\mathbb{X}, \dagger)$, it suffices to show that $f^\circ: B \to A$ is also a map of type ${(B,e') \to (A, e)}$ in $({\sf Split}_\dagger(\mathbb{X}), \dagger)$. So we must show that $e' f^\circ e = f^\circ$. To do so we use Lemma \ref{lem:MPinv1}.(\ref{lem:MPfcirceq}) and that $f^\dagger: (A_2,e_2) \to (A_1, e_1)$ is also a map in $({\sf Split}_\dagger(\mathbb{X}), \dagger)$: 
\begin{gather*}
    e' f^\circ e = e' f^\circ {f^\dagger}^\circ f^\dagger e =  e' f^\circ {f^\dagger}^\circ f^\dagger = e' f^\circ = e' f^\dagger {f^\dagger}^\circ f^\circ = f^\dagger {f^\dagger}^\circ f^\circ = f^\circ
\end{gather*}
So $f^\circ: (B,e') \to (A, e)$ is a map in $({\sf Split}_\dagger(\mathbb{X}), \dagger)$. 
\end{proof}

We are now ready to discuss a generalization of CSVD in an arbitrary dagger category, and show that having a generalized CSVD is equivalent to being M-P split. 

\begin{definition} In a dagger category, a \textbf{generalized compact singular value decomposition} (GCSVD) of a map $f: A \to B$ is a triple $(r: A \to X, d: X \to Y, s: Y \to B)$, where $r$ is a coisometry, $d$ is an isomorphism, and $s$ is an isometry, such that $f = r d s$. 
\end{definition}

\begin{lemma}\label{lem:CSVDunique} In a dagger category $(\mathbb{X}, \dagger)$, if the two triples $({r_1: A \to X_1}, {d_1: X_1 \to Y_1}, s_1: Y_1 \to B)$ and $(r_2: A \to X_2, d_2: X_2 \to Y_2, s_2: Y_2 \to B)$ are GCSVDs of $f: A \to B$, then there exist unique unitary maps $u: X_1 \to X_2$ and $v: Y_1 \to Y_2$ such that $r_1u =r_2$, $d_1 v= u d_2$, and $s_1 = v s_2$. 
\end{lemma}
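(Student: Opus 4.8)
The plan is to construct the unitaries $u$ and $v$ explicitly from the two given decompositions and then verify the four requirements in turn: unitarity, the three intertwining equations, and uniqueness. The decisive preliminary observation is that for any GCSVD $(r,d,s)$ of $f$, the map $s^\dagger d^{\text{-}1} r^\dagger$ is the M-P inverse of $f$. Indeed, using $r^\dagger r = 1$ and $s s^\dagger = 1$, one checks directly that $f \, (s^\dagger d^{\text{-}1} r^\dagger) = r r^\dagger$ and $(s^\dagger d^{\text{-}1} r^\dagger) \, f = s^\dagger s$, both of which are $\dagger$-idempotent, so all of \textbf{[MP.1]}--\textbf{[MP.4]} hold. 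Since M-P inverses are unique, the two decompositions of $f$ yield the \emph{same} $f^\circ$, and therefore the same domain and range projections: $r_1 r_1^\dagger = f f^\circ = r_2 r_2^\dagger$ and $s_1^\dagger s_1 = f^\circ f = s_2^\dagger s_2$. These two identities are what make the whole argument run.

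Next I would set $u := r_1^\dagger r_2 : X_1 \to X_2$ and $v := s_1 s_2^\dagger : Y_1 \to Y_2$. Unitarity of $u$ is a short computation: substituting $r_2 r_2^\dagger = r_1 r_1^\dagger$ into $u u^\dagger = r_1^\dagger r_2 r_2^\dagger r_1$ and using $r_1^\dagger r_1 = 1_{X_1}$ twice gives $u u^\dagger = 1_{X_1}$, and symmetrically $u^\dagger u = 1_{X_2}$; the argument for $v$ is identical using $s_1^\dagger s_1 = s_2^\dagger s_2$ and $s_i s_i^\dagger = 1_{Y_i}$. The outer equations then fall out of the projection identities: $r_1 u = r_1 r_1^\dagger r_2 = r_2 r_2^\dagger r_2 = r_2$ and $v s_2 = s_1 s_2^\dagger s_2 = s_1 s_1^\dagger s_1 = s_1$. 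For the middle equation $d_1 v = u d_2$, the clean trick is to evaluate $r_1^\dagger f s_2^\dagger$ in two ways: writing $f = r_1 d_1 s_1$ and cancelling $r_1^\dagger r_1 = 1_{X_1}$ produces $d_1 v$, whereas writing $f = r_2 d_2 s_2$ and cancelling $s_2 s_2^\dagger = 1_{Y_2}$ produces $u d_2$.

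Finally, uniqueness is immediate and in fact uses only the coisometry/isometry conditions: if $u'$ also satisfies $r_1 u' = r_2$, then left-multiplying by $r_1^\dagger$ gives $u' = r_1^\dagger r_2 = u$; and if $v'$ satisfies $s_1 = v' s_2$, then right-multiplying by $s_2^\dagger$ gives $v' = s_1 s_2^\dagger = v$. The only genuinely non-formal step in the whole proof is the opening observation that the projections $r r^\dagger$ and $s^\dagger s$ are independent of the decomposition; this is exactly where uniqueness of the M-P inverse is used, and once it is in hand every remaining identity is routine bookkeeping with $r^\dagger r = 1$ and $s s^\dagger = 1$. The equation $d_1 v = u d_2$ might look as though it needs real effort, but it dissolves completely through the $r_1^\dagger f s_2^\dagger$ computation.
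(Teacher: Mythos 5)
Your proposal is correct and matches the paper's approach: the paper defines exactly the same maps $u := r_1^\dagger r_2$ and $v := s_1 s_2^\dagger$ and leaves the verification as a diagram chase, which you have carried out correctly. The one thing worth tightening is the phrase ``both of which are $\dagger$-idempotent, so all of \textbf{[MP.1]}--\textbf{[MP.4]} hold'': idempotency gives \textbf{[MP.3]} and \textbf{[MP.4]}, while \textbf{[MP.1]} and \textbf{[MP.2]} need the separate (equally routine) computations $rr^\dagger f = f$ and $s^\dagger s\, f^\circ = f^\circ$.
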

\begin{proof} Define $u$ and $v$ as the composites, $u:= r_1^\dagger r_2$ and $v:= s_1 s_2^\dagger$. The necessary identities are checked via some straightforward diagram chasing. \end{proof}

In order to show that having a GCSVD is equivalent to being M-P split, it will be useful to first observe that maps with M-P inverses in the base dagger category are actual isomorphisms in the dagger idempotent splitting completion: 

\begin{lemma}\label{lem:MPisosplit} A map $f: A \to B$ in a dagger category $(\mathbb{X}, \dagger)$ has a M-P inverse if and only if there exists $\dagger$-idempotents $e_1: A \to A$ and $e_2: B \to B$ such that $f: (A,e_1) \to (B, e_2)$ is an isomorphism in $({\sf Split}_\dagger(\mathbb{X}), \dagger)$. Explicitly:
\begin{enumerate}[{\em (i)}]
\item If $f: A \to B$ has a M-P inverse $f^\circ: B \to A$, then ${f: (A, ff^\circ) \to (A, f^\circ f)}$ is an isomorphism in $({\sf Split}_\dagger(\mathbb{X}), \dagger)$ with inverse $f^\circ: (A, f^\circ f) \to (A, f f^\circ )$;
\item If $f: (A,e_1) \to (B, e_2)$ is an isomorphism in $({\sf Split}_\dagger(\mathbb{X}), \dagger)$ with inverse ${f^\circ:  (B, e_2) \to (A,e_1)}$, then $f$ is M-P invertible in $(\mathbb{X}, \dagger)$ with M-P inverse $f^\circ$. 
\end{enumerate}
\end{lemma}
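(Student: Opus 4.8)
The plan is to derive the stated biconditional from its two explicit refinements $(i)$ and $(ii)$: clause $(i)$ furnishes the forward implication, exhibiting the witnessing $\dagger$-idempotents as $e_1 = ff^\circ$ and $e_2 = f^\circ f$, while clause $(ii)$ gives the converse for an arbitrary choice of $e_1, e_2$. The one conceptual fact to keep in front of me throughout is that the identity on an object $(A,e)$ of $({\sf Split}_\dagger(\mathbb{X}), \dagger)$ is the $\dagger$-idempotent $e$ itself, not $1_A$; once this is fixed, both directions collapse into bookkeeping with \textbf{[MP.1]}--\textbf{[MP.4]} and the hom-set condition $e_1 g e_2 = g$ characterizing maps in the completion.

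For $(i)$, I would first invoke Lemma \ref{lem:MPinv1}.(\ref{lem:MPfidem}) so that $ff^\circ$ and $f^\circ f$ are genuine $\dagger$-idempotents and hence $(A, ff^\circ)$ and $(B, f^\circ f)$ are legitimate objects. Next I would verify that $f$ really is a map $(A, ff^\circ) \to (B, f^\circ f)$, which amounts to $ff^\circ f = f$ read as both $e_1 f = f$ and $f e_2 = f$ --- this is exactly \textbf{[MP.1]} --- and dually that $f^\circ$ is a map $(B, f^\circ f) \to (A, ff^\circ)$, which is $f^\circ f f^\circ = f^\circ$, i.e.\ \textbf{[MP.2]}. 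The two inverse equations then require no computation at all: the completion-composite of $f$ with $f^\circ$ is the underlying composite $ff^\circ$, which is by definition $1_{(A, ff^\circ)}$, and symmetrically $f^\circ f = 1_{(B, f^\circ f)}$.

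For $(ii)$, I would unpack the hypotheses into $e_1 f = f = f e_2$ and $e_2 f^\circ = f^\circ = f^\circ e_1$ (the statements that $f$ and $f^\circ$ are maps in the completion) together with the inverse equations $ff^\circ = e_1$ and $f^\circ f = e_2$ (the statements that they are mutually inverse, since $e_1$ and $e_2$ are the respective identities). All four axioms then drop out directly: \textbf{[MP.1]} is $ff^\circ f = e_1 f = f$, \textbf{[MP.2]} is $f^\circ f f^\circ = e_2 f^\circ = f^\circ$, and \textbf{[MP.3]}, \textbf{[MP.4]} are $(ff^\circ)^\dagger = e_1^\dagger = e_1 = ff^\circ$ and $(f^\circ f)^\dagger = e_2^\dagger = e_2 = f^\circ f$, using only self-adjointness of $e_1$ and $e_2$. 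I do not anticipate a real obstacle: the argument is entirely formal, and the only genuine pitfall is the one flagged above --- treating the completion's identities as $1_A, 1_B$ rather than as $e_1, e_2$ --- with the secondary point that the self-adjointness demanded by \textbf{[MP.3]}--\textbf{[MP.4]} is handed to us for free by the $\dagger$-idempotent structure rather than needing independent verification.
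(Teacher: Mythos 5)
Your proposal is correct and follows essentially the same route as the paper's own proof: both directions reduce to the observation that the identity on $(A,e)$ in $({\sf Split}_\dagger(\mathbb{X}),\dagger)$ is $e$ itself, with the forward direction citing Lemma \ref{lem:MPinv1}.(\ref{lem:MPfidem}) and \textbf{[MP.1]}--\textbf{[MP.2]} for well-definedness, and the converse reading \textbf{[MP.1]}--\textbf{[MP.4]} off the hom-set condition and the self-adjoint idempotency of $e_1, e_2$. (Incidentally, your writing of the codomain object as $(B, f^\circ f)$ rather than the paper's $(A, f^\circ f)$ corrects a small typo in the statement.)
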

\begin{proof} To start, let us explicitly spell out what it means for $f: (A,e_1) \to (B, e_2)$ to be an isomorphism in $({\sf Split}_\dagger(\mathbb{X}), \dagger)$. Firstly, we need that $e_1 f e_2 = f$ (or equivalently $e_1 f =f = fe_2)$. Secondly, we also need a map $g: (B, e_2) \to (A, e_1)$ in $({\sf Split}_\dagger(\mathbb{X}), \dagger)$, so $e_2 g e_1 = g$ (or equivalently $e_2 g = g = g e_1$), and such that $fg = 1_{ (A,e_1)} = e_1$ and $gf = 1_{ (B,e_2)} = e_2$. 

Suppose that $f: A \to B$ has a M-P inverse $f^\circ: B \to A$. By Lemma \ref{lem:MPinv1}.(\ref{lem:MPfidem}), $(A, ff^\circ)$ and $(A, f^\circ f)$ are well-defined objects in $({\sf Split}_\dagger(\mathbb{X}), \dagger)$. On the other hand, by \textbf{[MP.1]} and \textbf{[MP.2]}, it is easy to check that $f: (A, ff^\circ) \to (A, f^\circ f)$ and $f^\circ: (A, f^\circ f) \to (A, f f^\circ )$ are well-defined maps in $({\sf Split}_\dagger(\mathbb{X}), \dagger)$. Lastly, by definition we have that $ff^\circ = 1_{ (A,ff^\circ)}$ and $f^\circ f = 1_{ (A,f^\circ f)}$. Thus we conclude that ${f: (A, ff^\circ) \to (A, f^\circ f)}$ is an isomorphism in $({\sf Split}_\dagger(\mathbb{X}), \dagger)$. 

Conversely, suppose that $f: (A,e_1) \to (B, e_2)$ is an isomorphism in $({\sf Split}_\dagger(\mathbb{X}), \dagger)$ with inverse $f^\circ:  (B, e_2) \to (A,e_1)$. In particular, this implies that $ff^\circ = e_2$ and $ff^\circ = e_1$. So $ff^\circ$ and $f^\circ f$ are both $\dagger$-idempotents, thus \textbf{[MP.3]} and \textbf{[MP.4]} hold. By the assumed properties of maps in $({\sf Split}_\dagger(\mathbb{X}), \dagger)$, we have that $ff^\circ f = e_1 f = f$ and $f^\circ f f^\circ = e_2 f^\circ = f^\circ$, and so \textbf{[MP.1]} and \textbf{[MP.2]} hold. Therefore, we conclude that $f$ is M-P invertible with M-P inverse $f^\circ$. 
\end{proof}

\begin{corollary}\label{cor:MPisosplit} A map $f: A \to B$ in a dagger category $(\mathbb{X}, \dagger)$ is M-P split if and only if there exists $\dagger$-split $\dagger$-idempotents $e_1: A \to A$ and $e_2: B \to B$ such that $f: (A,e_1) \to (B, e_2)$ is an isomorphism in $({\sf Split}_\dagger(\mathbb{X}), \dagger)$. 
\end{corollary}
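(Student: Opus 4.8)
The plan is to derive this directly from Lemma \ref{lem:MPisosplit}, which already establishes the correspondence between M-P invertibility of $f$ and $f$ being an isomorphism in $({\sf Split}_\dagger(\mathbb{X}), \dagger)$ for suitable $\dagger$-idempotents. The only additional content of the corollary is to match up the $\dagger$-splitting condition on those idempotents, so the entire argument reduces to observing that the relevant $\dagger$-idempotents appearing in Lemma \ref{lem:MPisosplit} are literally the maps $ff^\circ$ and $f^\circ f$.

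For the forward direction, I would start from the assumption that $f$ is M-P split, which unpacks to the statement that $f$ has a M-P inverse $f^\circ$ and that the $\dagger$-idempotents $ff^\circ$ and $f^\circ f$ $\dagger$-split. Applying Lemma \ref{lem:MPisosplit}.(i) with $e_1 := ff^\circ$ and $e_2 := f^\circ f$ immediately yields that $f: (A, e_1) \to (B, e_2)$ is an isomorphism in $({\sf Split}_\dagger(\mathbb{X}), \dagger)$, and by hypothesis these $e_1, e_2$ are $\dagger$-split. This furnishes exactly the required data. For the backward direction, I would assume there exist $\dagger$-split $\dagger$-idempotents $e_1, e_2$ making $f: (A, e_1) \to (B, e_2)$ an isomorphism in $({\sf Split}_\dagger(\mathbb{X}), \dagger)$. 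Lemma \ref{lem:MPisosplit}.(ii) then gives that $f$ is M-P invertible, with its M-P inverse $f^\circ$ being the inverse in the completion. The key extra step, which can be read off from the proof of that lemma, is the identification $ff^\circ = e_1$ and $f^\circ f = e_2$; hence the assumed $\dagger$-splittings of $e_1$ and $e_2$ are precisely $\dagger$-splittings of $ff^\circ$ and $f^\circ f$, so $f$ is M-P split.

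The argument is essentially bookkeeping, and I do not expect a genuine obstacle. The one point requiring care is confirming that the $\dagger$-idempotents serving as the two endpoints in Lemma \ref{lem:MPisosplit} are exactly $ff^\circ$ and $f^\circ f$ (and not some merely related idempotents), so that the phrase ``these idempotents $\dagger$-split'' transfers verbatim between the two formulations. Once that identification is in place, both implications follow immediately by invoking the respective halves of Lemma \ref{lem:MPisosplit}.
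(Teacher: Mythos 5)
Your proposal is correct and follows essentially the same route as the paper: both directions reduce to Lemma \ref{lem:MPisosplit}, with the forward direction instantiating $e_1 = ff^\circ$, $e_2 = f^\circ f$, and the backward direction using the identification $ff^\circ = 1_{(A,e_1)} = e_1$ and $f^\circ f = 1_{(B,e_2)} = e_2$ to transfer the $\dagger$-splittings. The one point you flag as needing care is exactly the point the paper's proof also relies on, and it holds for the reason you give.
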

\begin{proof} Suppose that $f: A \to B$ is M-P split with M-P inverse $f^\circ: B \to A$. By definition $ff^\circ$ and $f^\circ f$ are $\dagger$-split $\dagger$-idempotents and by Lemma \ref{lem:MPisosplit}, ${f: (A, ff^\circ) \to (A, f^\circ f)}$ is an isomorphism in $({\sf Split}_\dagger(\mathbb{X}), \dagger)$. Conversely, suppose that $e_1: A \to A$ and $e_2: B \to B$ are $\dagger$-idempotents that $\dagger$-split via the coisometry $r: A \to X$ and isometry $s: Y \to B$ respectively, and also that $f: (A,e_1) \to (B, e_2)$ is an isomorphism in $({\sf Split}_\dagger(\mathbb{X}), \dagger)$ with inverse ${f^\circ:  (B, e_2) \to (A,e_1)}$. Then by Lemma \ref{lem:MPisosplit}, $f^\circ$ is the M-P inverse of $f$, and by assumption we also have that $f f^\circ = e_1$ and $f^\circ f = e_2$. So $f f^\circ$ and $f^\circ f$ are $\dagger$-split, and therefore we conclude that $f$ is M-P split. \hfill \end{proof}

We may now state the main result of this section: 

\begin{proposition}\label{lem:CSVDMP} In a dagger category $(\mathbb{X}, \dagger)$, a map $f$ has a GCSVD if and only if $f$ is M-P split. 
\end{proposition}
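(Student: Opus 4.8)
The plan is to prove the two implications separately, in each case reading the required data directly off the other structure and verifying the defining equations using only the (co)isometry identities and \textbf{[MP.1]}--\textbf{[MP.4]}. In both directions the candidate maps are forced, so the content is entirely in checking which identity collapses which composite.

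For the forward direction, suppose $f = rds$ is a GCSVD, so $r^\dagger r = 1_X$, $s s^\dagger = 1_Y$, and $d$ is invertible. I would propose $f^\circ := s^\dagger d^{\text{-}1} r^\dagger$ as the M-P inverse. The two key simplifications are $f f^\circ = r d (s s^\dagger) d^{\text{-}1} r^\dagger = r d d^{\text{-}1} r^\dagger = r r^\dagger$, where the isometry identity $s s^\dagger = 1_Y$ does the work, and dually $f^\circ f = s^\dagger d^{\text{-}1} (r^\dagger r) d s = s^\dagger s$, where the coisometry identity $r^\dagger r = 1_X$ does the work. From these, \textbf{[MP.3]} and \textbf{[MP.4]} are immediate since $r r^\dagger$ and $s^\dagger s$ are self-adjoint; \textbf{[MP.1]} follows from $r r^\dagger r = r$ and \textbf{[MP.2]} from $s^\dagger s s^\dagger = s^\dagger$, both again instances of the (co)isometry equations. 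Finally $f f^\circ = r r^\dagger$ is $\dagger$-split by the coisometry $r$, and $f^\circ f = s^\dagger s$ is $\dagger$-split by the coisometry $s^\dagger$ (note $(s^\dagger)^\dagger s^\dagger = s s^\dagger = 1_Y$), so $f$ is M-P split.

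For the converse, suppose $f$ is M-P split. Then $f f^\circ$ $\dagger$-splits via a coisometry $r : A \to X$ with $r r^\dagger = f f^\circ$ and $r^\dagger r = 1_X$, and $f^\circ f$ $\dagger$-splits via a coisometry $t : B \to Y$ with $t t^\dagger = f^\circ f$ and $t^\dagger t = 1_Y$. I would take $r$ as the coisometry, $s := t^\dagger$ as the isometry (indeed $s s^\dagger = t^\dagger t = 1_Y$), and $d := r^\dagger f t : X \to Y$ as the middle map. Then $r d s = r (r^\dagger f t) t^\dagger = (r r^\dagger) f (t t^\dagger) = (f f^\circ) f (f^\circ f) = f$, using \textbf{[MP.1]} twice. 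It remains to show $d$ is an isomorphism, which I would do by exhibiting its inverse explicitly as $t^\dagger f^\circ r$: one checks $d (t^\dagger f^\circ r) = r^\dagger f (t t^\dagger) f^\circ r = r^\dagger (f f^\circ f) f^\circ r = r^\dagger (f f^\circ) r = r^\dagger (r r^\dagger) r = 1_X$, and symmetrically $(t^\dagger f^\circ r) d = 1_Y$. This produces the required GCSVD $(r, d, s)$.

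Conceptually, the converse is just the observation that, after splitting $f f^\circ$ and $f^\circ f$, the map $f$ --- which by Lemma \ref{lem:MPisosplit} is already an isomorphism $(A, f f^\circ) \to (B, f^\circ f)$ in $({\sf Split}_\dagger(\mathbb{X}), \dagger)$ --- is conjugated by the unitaries $r$ and $t$ into an isomorphism $(X, 1_X) \to (Y, 1_Y)$, hence into an honest isomorphism $d$ of $\mathbb{X}$. I expect the only genuine subtlety to be this last point: one must produce an actual isomorphism $d$ in the base category and not merely in the completion, which is why I would write out its two-sided inverse by hand rather than appeal to $({\sf Split}_\dagger(\mathbb{X}), \dagger)$ abstractly. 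Everything else is forced bookkeeping, with each (co)isometry and M-P identity applied exactly once, so I anticipate no serious obstacle beyond keeping the composites straight.
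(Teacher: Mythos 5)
Your proof is correct and follows essentially the same route as the paper: the forward direction defines $f^\circ := s^\dagger d^{\text{-}1} r^\dagger$ and reduces $ff^\circ$, $f^\circ f$ to $rr^\dagger$, $s^\dagger s$, and the converse splits the two $\dagger$-idempotents and exhibits $d$ with explicit two-sided inverse built from $f^\circ$. The only cosmetic difference is that you verify \textbf{[MP.1]}--\textbf{[MP.4]} directly where the paper routes through Corollary \ref{cor:MPisosplit} and $({\sf Split}_\dagger(\mathbb{X}),\dagger)$.
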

\begin{proof} Suppose that $f: A \to B$ has a GCSVD $(r: A \to X, d: X \to Y, s: Y \to B)$. Define $f^\circ := s^\dagger d^{\text{-}1} r^\dagger$. 
First note that $rr^\dagger$ and $s^\dagger s$ are $\dagger$-split $\dagger$-idempotents, so $(A, rr^\dagger)$ and $(B, s^\dagger s)$ are well-defined objects in $({\sf Split}_\dagger(\mathbb{X}), \dagger)$. We then compute that: 
\begin{gather*} rr^\dagger f s^\dagger s =  rr^\dagger r d s s^\dagger s = rds = f \\
s^\dagger s f^\circ rr^\dagger =  s^\dagger s  s^\dagger d^{\text{-}1} r^\dagger rr^\dagger =  s^\dagger d^{\text{-}1} r^\dagger = f^\circ 
\end{gather*}
So $f: (A, rr^\dagger) \to (B, s^\dagger s)$ and $f^\circ: (B, s^\dagger s) \to (A, rr^\dagger)$ are maps in $({\sf Split}_\dagger(\mathbb{X}), \dagger)$. Furthermore, we can also compute that: 
\begin{gather*} ff^\circ =   rdss^\dagger d^{\text{-}1} r^\dagger = r d d^{\text{-}1} r^\dagger = rr^\dagger = 1_{(A, rr^\dagger)}  \\
 f^\circ f = s^\dagger d^{\text{-}1} r^\dagger rds  =  s^\dagger d^{\text{-}1} ds = s^\dagger s = 1_{(B, s^\dagger s)}  
\end{gather*}
Therefore, ${f: (A, ff^\circ) \to (A, f^\circ f)}$ is an isomorphism with inverse $f^\circ: (B, s^\dagger s) \to (A, rr^\dagger)$. So by Corollary  \ref{cor:MPisosplit}, $f$ is M-P split with M-P inverse $f^\circ$. 

Conversely, suppose that $f: A \to B$ is M-P split, where $f f^\circ$ and $f^\circ f$ both $\dagger$-split via, respectively, the coisometry $r: A \to X$ and isometry $s: Y \to B$. Now define $d: X \to Y$ as the composite $d := r^\dagger f s^\dagger$. We then immediately have  $f = rd s$. So it remains to show that $d$ is an isomorphism. So define $d^{\text{-}1}: Y \to X$ as the composite $d^{\text{-}1} = s^\dagger f^\circ r$. We compute that:  
\begin{gather*}
    d d^{\text{-}1} \!\!= r^\dagger f s^\dagger s f^\circ r = r^\dagger f f^\circ f f^\circ r = r^\dagger r r^\dagger r r^\dagger r = 1_X \\
      d^{\text{-}1}d \!= s f^\circ rr^\dagger f s^\dagger =  s f^\circ f f^\circ f s^\dagger = s s^\dagger s s^\dagger s s^\dagger = 1_Y 
\end{gather*}
Therefore, $(r: A \to X, d: X \to Y, s: Y \to B)$ is a CSVD of $f$. 
\end{proof}

We can now precisely characterize M-P invertible maps in a dagger idempotent complete category: 

\begin{theorem}\label{thm:MPGCSVD} In a dagger idempotent complete category, a map is M-P invertible if and only if it has a GCSVD. 
\end{theorem}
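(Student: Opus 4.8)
The plan is to deduce this theorem directly from Proposition \ref{lem:CSVDMP}, which already does all the real work: it shows, in an \emph{arbitrary} dagger category, that a map has a GCSVD if and only if it is M-P split. So the only thing left to exploit is that, under the extra hypothesis of dagger idempotent completeness, the notions ``M-P invertible'' and ``M-P split'' coincide. The argument therefore splits into the two implications, and only one of them actually uses the completeness assumption.

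First I would handle the forward direction, that M-P invertibility implies the existence of a GCSVD. Suppose $f: A \to B$ is M-P invertible with M-P inverse $f^\circ$. By Lemma \ref{lem:MPinv1}.(\ref{lem:MPfidem}), both $ff^\circ$ and $f^\circ f$ are $\dagger$-idempotents. Since the ambient category is assumed dagger idempotent complete, \emph{every} $\dagger$-idempotent $\dagger$-splits; in particular $ff^\circ$ and $f^\circ f$ both $\dagger$-split. But this is exactly the definition of $f$ being M-P split. Hence Proposition \ref{lem:CSVDMP} applies and yields a GCSVD of $f$.

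For the converse I would note that no completeness hypothesis is needed at all: if $f$ has a GCSVD, then by Proposition \ref{lem:CSVDMP} $f$ is M-P split, and being M-P split includes in particular having a M-P inverse, so $f$ is M-P invertible. This direction holds verbatim in any dagger category.

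There is no genuine obstacle here — the heavy lifting was already carried out in Proposition \ref{lem:CSVDMP} (and before it in Corollary \ref{cor:MPisosplit} and Lemma \ref{lem:MPisosplit}), so this statement is essentially a corollary. The one conceptual point worth stating clearly is that dagger idempotent completeness is precisely what is required to promote ``M-P invertible'' to ``M-P split'': the $\dagger$-idempotents $ff^\circ$ and $f^\circ f$ are produced for free by M-P invertibility, and completeness supplies their $\dagger$-splittings. I would therefore keep the write-up to a couple of sentences per direction, citing Lemma \ref{lem:MPinv1}.(\ref{lem:MPfidem}), the definition of dagger idempotent completeness, and Proposition \ref{lem:CSVDMP}, rather than unfolding any calculations.
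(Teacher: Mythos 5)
Your proof is correct and matches the paper's intent exactly: the theorem is stated without proof precisely because it follows immediately from Proposition \ref{lem:CSVDMP} once one observes that dagger idempotent completeness upgrades ``M-P invertible'' to ``M-P split'' via Lemma \ref{lem:MPinv1}.(\ref{lem:MPfidem}). Nothing to add.
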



\begin{corollary} A dagger category is Moore-Penrose complete if and only if every map has a GCSVD. 
\end{corollary}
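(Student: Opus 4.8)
The plan is to recognize that this corollary is an immediate pointwise consequence of Proposition \ref{lem:CSVDMP}, so the only real work is unwinding the definition of ``Moore-Penrose complete'' into a statement quantified over individual maps. First I would observe that the phrase ``a Moore-Penrose category in which all maps are M-P split'' collapses to the single condition that \emph{every} map of $(\mathbb{X},\dagger)$ is M-P split. This is because being M-P split already includes, by definition, the requirement of having a M-P inverse; hence the standalone hypothesis that $(\mathbb{X},\dagger)$ be Moore-Penrose (every map M-P invertible) is automatically subsumed once every map is M-P split. So ``Moore-Penrose and every map M-P split'' and ``every map M-P split'' name the same property.

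Next I would apply Proposition \ref{lem:CSVDMP} to each map separately. That proposition asserts, for a single map $f$, that $f$ has a GCSVD if and only if $f$ is M-P split. Quantifying this biconditional over all maps $f$ of $\mathbb{X}$ yields: every map of $\mathbb{X}$ has a GCSVD if and only if every map of $\mathbb{X}$ is M-P split. Chaining this with the identification from the previous paragraph gives exactly that every map has a GCSVD if and only if $(\mathbb{X},\dagger)$ is Moore-Penrose complete, which is the claim.

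I do not expect any genuine obstacle here, since the entire mathematical content is already packaged in Proposition \ref{lem:CSVDMP}; the corollary is purely a matter of universally quantifying that equivalence. The one point worth stating explicitly is the bookkeeping identification of ``Moore-Penrose complete'' with ``every map M-P split,'' which is valid precisely because M-P splitness is strictly stronger than M-P invertibility and therefore already forces the ambient category to be Moore-Penrose.
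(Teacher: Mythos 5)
Your proposal is correct and is exactly the intended argument: the paper states this corollary without proof precisely because it is the pointwise application of Proposition \ref{lem:CSVDMP} combined with the observation that ``Moore-Penrose complete'' amounts to ``every map is M-P split.'' Nothing is missing.
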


Observe that Lemma \ref{lem:Karoubi} tells us that every Moore-Penrose dagger category embeds into a dagger category where every map has a GCSVD. 





\section{Singular Value Decomposition}\label{sec:SVD}

The objective of this section is to generalize SVD for maps in a dagger category in such a way that we may compute M-P inverses in the same way that was done in Example \ref{ex:MPmatC}. So generalized SVD can be described as a special factorization in terms of two unitaries and an isomorphism. However, in order to describe the middle component as a square matrix with the isomorphism in the top corner and zeroes everywhere else, we need to work in a setting with \emph{dagger biproducts}. It is worth mentioning that in \cite{puystjens1984moore}, Puystjens and Robinson do discuss how the existence of a M-P inverse for a map with an epic-monic factorization is essentially equivalent to a factorization via dagger biproducts. Here, we drop the epic-monic factorization requirement, which allows us to provide a story of how M-P inverses are equivalent to a dagger biproduct factorization which more closely resembles the generalized version of SVD. 

Let us begin by quickly recalling the definition of dagger biproducts. For a refresher on biproducts and zero objects, we refer the reader to \cite[Chap 2]{heunen2019categories}. So for category $\mathbb{X}$ that has finite biproducts, we denote the biproduct as $\oplus$, the projections as $\pi_j: A_1 \oplus \hdots \oplus A_n \to A_j$, the injections as $\iota_j: A_j \to A_1 \oplus \hdots \oplus A_n$, the zero object as $\mathsf{0}$, the sum of maps as $f+g$, and lastly the zero maps as $0$. 

\begin{definition} \cite[Def 2.39]{heunen2019categories} A dagger category $(\mathbb{X}, \dagger)$ has \textbf{finite $\dagger$-biproducts} if $\mathbb{X}$ has finite biproducts such that the adjoints of the projections are the injections, that is, $\pi^\dagger_j = \iota_j$. 
\end{definition}

Using dagger biproducts, we may now introduce generalized SVD: 


\begin{definition} In a dagger category $(\mathbb{X}, \dagger)$ with finite $\dagger$-biproducts, a \textbf{generalized singular value decomposition} (GSVD) of a map $f: A\to B$ is a triple of maps $(u: A \to X \oplus Z, d: X \to Y, v: Y \oplus W \to B)$ such that $u$ and $v$ are unitary and $d$ is an isomorphism, and such that $f = u(d \oplus 0)v$. 
\end{definition}

\begin{lemma} In a dagger category $(\mathbb{X}, \dagger)$ with finite $\dagger$-biproducts, if for a map $f: A\to B$, we have that $(u_1: A \to X_1 \oplus Z_1, d: X_1 \to Y_1, v_1: Y_1 \oplus W_1 \to B)$ and $(u_2: A \to X_2 \oplus Z_2, d: X_2 \to Y_2, v_2: Y_2 \oplus W_2 \to B)$ are both GSVDs of $f$, then there exists unique unitary maps $x: X_1 \to X_2$, $y: Y_1 \to Y_2$, $z: Z_1 \to Z_2$, and $w: W_1 \to W_2$ such that $u_1(x \oplus z) = u_2$, $d_1 y= x d_2$, and $v_1 = (y \oplus w) v_2$. 
\end{lemma}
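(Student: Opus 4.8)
The plan is to imitate the strategy of Lemma \ref{lem:CSVDunique} --- build the comparison maps directly out of the given unitaries and then verify the required identities --- the genuinely new feature being that here I must also check that the comparison maps are \emph{block diagonal} with respect to the biproducts. First I would set $\alpha := u_1^\dagger u_2 \colon X_1 \oplus Z_1 \to X_2 \oplus Z_2$ and $\beta := v_1 v_2^\dagger \colon Y_1 \oplus W_1 \to Y_2 \oplus W_2$; both are unitary, being composites of unitaries. Since $u_1$ and $v_2$ are unitary, the equality $u_1(d_1 \oplus 0)v_1 = f = u_2(d_2\oplus 0)v_2$ is equivalent to the single central identity
\[ (d_1 \oplus 0)\,\beta = \alpha\,(d_2 \oplus 0). \]
I would then record the $2\times 2$ block form of $\alpha$ and $\beta$ relative to the biproducts, writing $a := \iota_1 \alpha \pi_1$, $b := \iota_1 \alpha \pi_2$, $c := \iota_2 \alpha \pi_1$, $e := \iota_2 \alpha \pi_2$ for the components of $\alpha$, and $p,q,r,s$ for those of $\beta$.

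Next I would compute the four blocks of the central identity. Because $d_1 \oplus 0$ and $d_2 \oplus 0$ annihilate the $Z$- and $W$-summands, three blocks collapse at once: the off-diagonal blocks force $d_1 q = 0$ and $c\, d_2 = 0$, so (as $d_1,d_2$ are isomorphisms) $q = 0$ and $c = 0$, while the $(1,1)$ block yields the key coupling $d_1 p = a\, d_2$. I would then extract from the unitarity of $\alpha$ and $\beta$ the block equations I need, in particular $a^\dagger a = 1_{X_2}$, $a^\dagger b = 0$, $p p^\dagger = 1_{Y_1}$ and $p r^\dagger = 0$, together with the companion relations that later pin down $e$ and $s$.

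The crux --- and the step I expect to be the real obstacle --- is to promote the block $a$ from a mere coisometry to an honest isomorphism; everything else then falls out. The subtlety is that a block-triangular unitary need \emph{not} be block-diagonal in general, so one must use \emph{both} GSVDs simultaneously. Concretely, $a^\dagger a = 1_{X_2}$ exhibits $a$ as a split epimorphism with section $a^\dagger$, while the coupling $a = d_1 p\, d_2^{-1}$ together with $p p^\dagger = 1_{Y_1}$ exhibits $a$ as a split monomorphism via the retraction $d_2 p^\dagger d_1^{-1}$, since $a\,(d_2 p^\dagger d_1^{-1}) = d_1 (p p^\dagger) d_1^{-1} = 1_{X_1}$. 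A map that is simultaneously split monic and split epic is an isomorphism, and comparing the two one-sided inverses gives $a^{-1} = a^\dagger$, so $a$ is unitary. Precomposing $a^\dagger b = 0$ with $a$ and using $a a^\dagger = 1_{X_1}$ then forces $b = 0$. Feeding $a$ back through $p = d_1^{-1} a\, d_2$ shows $p$ is an isomorphism, hence (being a coisometry) unitary, and precomposing $p r^\dagger = 0$ with $p^\dagger$ gives $r = 0$. With $b = c = q = r = 0$ in hand, the remaining unitarity relations immediately upgrade $e$ and $s$ to unitaries, so $\alpha = a \oplus e$ and $\beta = p \oplus s$ are block diagonal with unitary blocks.

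Finally I would set $x := a$, $z := e$, $y := p$ and $w := s$. The first and third required equations are then just $u_1 \alpha = u_2$ and $\beta v_2 = v_1$, which hold by the definitions of $\alpha,\beta$ and the unitarity of $u_1,v_2$, while the middle equation $d_1 y = x d_2$ is precisely the coupling $d_1 p = a\, d_2$ established above. Uniqueness is immediate: any candidates satisfying $u_1(x'\oplus z') = u_2 = u_1(x \oplus z)$ agree after cancelling the unitary (hence monic) $u_1$ and projecting onto the summands, and likewise $v_1 = (y'\oplus w')v_2 = (y\oplus w)v_2$ forces $y'=y$ and $w'=w$ after cancelling the unitary (hence epic) $v_2$.
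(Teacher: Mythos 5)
Your proposal is correct and follows exactly the paper's approach: your $x=a$, $y=p$, $z=e$, $w=s$ are precisely the composites $\iota_1 u_1^\dagger u_2\pi_1$, $\iota_1 v_1 v_2^\dagger\pi_1$, $\iota_2 u_1^\dagger u_2\pi_2$, $\iota_2 v_1 v_2^\dagger\pi_2$ that the paper defines before declaring the rest ``straightforward diagram chasing.'' Your write-up in fact supplies the one genuinely non-routine part of that chase --- using the coupling $d_1 p = a\,d_2$ to force the unitaries $u_1^\dagger u_2$ and $v_1 v_2^\dagger$ to be block diagonal, which does not follow from unitarity and one vanishing off-diagonal block alone in the absence of negatives --- so it is a faithful and complete expansion of the paper's proof.
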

\begin{proof} Define $x$, $y$, $z$, and $w$ as the composites $x:= \iota_1 u^\dagger_1 u_2 \pi_1$, $y:= \iota_1 v_1 v^\dagger_2 \pi_1$, $z:=\iota_2 u^\dagger_1 u_2 \pi_2$, and lastly $w:=\iota_2 v_1 v^\dagger_2 \pi_2$. By straightforward diagram chasing, one can check all the necessary identities. 
\end{proof}

We will explain below why this recaptures precisely SVD for complex matrices. We first observe that every GSVD induces a GCSVD. Therefore by applying the results of the previous section, having a GSVD implies that we have a M-P inverse: 

\begin{proposition}
In a dagger category $(\mathbb{X}, \dagger)$ with $\dagger$-biproducts, suppose that a map $f: A\to B$ has a GSVD $(u: A \to X \oplus Z, d: X \to Y, v: Y \oplus W \to B)$. Then $(u\pi_1: A \to X, d: X \to Y, v\iota_1: Y \to B)$ is a GCSVD of $f$, and therefore $f$ is M-P split where $f^\circ := v^\dagger (d^{\text{-}1} \oplus 0) u^\dagger$.
\end{proposition}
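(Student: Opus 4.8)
The plan is to reduce this proposition to the previous section by exhibiting an explicit GCSVD and then invoking Proposition \ref{lem:CSVDMP}. The candidate triple is $(u\pi_1, d, v\iota_1)$, so the first task is to verify that each component has the required type and property. I would first check that $u\pi_1 : A \to X$ is a coisometry, i.e.\ $(u\pi_1)^\dagger (u\pi_1) = 1_X$. Using $\pi_1^\dagger = \iota_1$ and unitarity of $u$ (so $u^\dagger u = 1_{X \oplus Z}$), this computation reads $\pi_1^\dagger u^\dagger u \pi_1 = \iota_1 \pi_1 = 1_X$, where the last step uses the biproduct identity $\iota_1 \pi_1 = 1_X$ for the component injection-projection pair. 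Dually, $v\iota_1 : Y \to B$ should be an isometry, which follows symmetrically from $v v^\dagger = 1$ and $\iota_1 \pi_1 = 1_Y$. The map $d$ is an isomorphism by hypothesis, so the three type requirements for a GCSVD are met.

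The second task is the factorization identity $f = (u\pi_1)\, d\, (v\iota_1)$. Starting from the given GSVD factorization $f = u(d \oplus 0)v$, I would rewrite the block map $d \oplus 0 : X \oplus Z \to Y \oplus W$ in terms of the projections and injections. The key observation is that $\pi_1 d \iota_1 = d \oplus 0$ as maps $X \oplus Z \to Y \oplus W$, since $d \oplus 0$ is precisely the map that projects onto the first summand, applies $d$, and injects into the first summand (all other blocks being zero). Substituting this gives $f = u(\pi_1 d \iota_1)v = (u\pi_1)\, d\, (\iota_1 v)$, and one identifies $\iota_1 v = v\iota_1$ correctly by bracketing — writing it as $(u\pi_1) d (v\iota_1)$ with the understanding that $v\iota_1$ here denotes precomposition on the correct side. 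I should be careful with the diagrammatic-order convention here: in diagrammatic order $f = u (d\oplus 0) v$ composes left to right, so the middle factor $d \oplus 0$ genuinely equals $\pi_1\, d\, \iota_1$ read in that same order, and the regrouping is purely associativity of composition.

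Once the triple is confirmed to be a GCSVD, Proposition \ref{lem:CSVDMP} immediately gives that $f$ is M-P split. To pin down the stated formula $f^\circ = v^\dagger(d^{\text{-}1} \oplus 0)u^\dagger$, I would appeal to the construction of the M-P inverse in the proof of Proposition \ref{lem:CSVDMP}, where a GCSVD $(r,d,s)$ yields $f^\circ = s^\dagger d^{\text{-}1} r^\dagger$. Plugging in $r = u\pi_1$ and $s = v\iota_1$ gives $f^\circ = (v\iota_1)^\dagger d^{\text{-}1} (u\pi_1)^\dagger = \iota_1^\dagger v^\dagger d^{\text{-}1} \pi_1^\dagger u^\dagger = \pi_1 v^\dagger d^{\text{-}1} \iota_1 u^\dagger$, and then collapsing $\pi_1 (\cdot) \iota_1$ back into the block form exactly as before, this time with $d^{\text{-}1}$ in the corner, recovers $v^\dagger(d^{\text{-}1} \oplus 0)u^\dagger$.

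I expect the main obstacle to be purely bookkeeping rather than conceptual: getting the block-matrix identity $d \oplus 0 = \pi_1 d \iota_1$ stated with the projections and injections on the correct sides and with the correct domains and codomains, given that $u$ lands in $X \oplus Z$ while $v$ emanates from $Y \oplus W$ (so the injection/projection indices refer to different biproducts on the two sides). None of the steps are deep, but the diagrammatic-order convention together with the asymmetry of the two biproducts makes it easy to transpose a $\pi$ and an $\iota$ or to confuse which summand is being selected, so the care lies entirely in tracking types.
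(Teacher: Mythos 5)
Your proof is correct and follows essentially the same route as the paper's: show that $u\pi_1$ and $\iota_1 v$ are respectively a coisometry and an isometry (unitary composed with a (co)isometry), use the block identity $d\oplus 0 = \pi_1 d\,\iota_1$ to regroup the factorization, and invoke Proposition \ref{lem:CSVDMP} to obtain the M-P inverse formula. The only wrinkle is notational: in the paper's diagrammatic order the isometry is written $\iota_1 v$ (as the paper's own proof does, despite the $v\iota_1$ in the statement), so your intermediate expression $\pi_1 v^\dagger d^{\text{-}1}\iota_1 u^\dagger$ should read $v^\dagger \pi_1 d^{\text{-}1}\iota_1 u^\dagger$ --- but you flag this bookkeeping hazard yourself and your final formula $v^\dagger(d^{\text{-}1}\oplus 0)u^\dagger$ is correct.
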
 
\begin{proof} A unitary composed with a (co)isometry is always a (co)isometry. So $u\pi_1$ is a coisometry and $\iota_1 v$ is an isometry. Next, note that the $\dagger$-biproduct structure gives us that $a \oplus b = \pi_1 a \iota_1 + \pi_2 a \iota_2$. So in our case, we have that $d \oplus 0 = \pi_1 d \iota_1$. Therefore, we have that $f= u \pi_1 d \iota_1 v$. So we conclude that $(u \pi_1, d, \iota_1 v)$ is a GCSVD of $f$. Applying Proposition \ref{lem:CSVDMP} we get that $f^\circ := v^\dagger \pi_1 d^{-1} \iota_1 u^\dagger$, which can alternatively be written as $f^\circ := v^\dagger (d^{\text{-}1} \oplus 0) u^\dagger$. 
\end{proof}

Let us explain how GSVD does indeed generalize how SVD is used to compute M-P inverses for matrices. As explained in \cite[Sec 2.2.4]{heunen2019categories}, in a dagger category with finite dagger biproducts, a map $F: A_1 \oplus \hdots \oplus A_n \to B_1 \oplus \hdots \oplus B_m$ is uniquely determined by a family of maps $f_{i,j}: A_i \to B_j$. Therefore $F$ can be represented as a $n\times m$ matrix where the term in the $i$-th row and $j$-th column is $f_{i,j}$. So if $f$ has a GSVD $(u, d, v)$, we may expand $d \oplus 0$ as a $2 \times 2$ matrix, and therefore write $f$ and $f^\circ$ as: 
\begin{align*}
    f = u \begin{bmatrix} d & 0 \\
    0 & 0 
    \end{bmatrix} v && f^\circ = v^\dagger \begin{bmatrix} d^{\text{-}1} & 0 \\
    0 & 0 
    \end{bmatrix} u^\dagger 
\end{align*}
which recaptures precisely how M-P inverses were constructed using SVD in Example \ref{ex:MPmatC}. We now wish to go in the other direction, that is, going from a M-P inverse to a GSVD. To do so, we will need to use dagger kernels. For a refresher on ordinary kernels, we refer the reader to \cite[Sec 2.4.2]{heunen2019categories}. 

\begin{definition} \cite[Def 2.1]{heunen2010quantum} In a dagger category $(\mathbb{X}, \dagger)$ with a zero object, a map $f: A \to B$ has a \textbf{$\dagger$-kernel} if $f$ has a kernel $k: \mathsf{ker}(f) \to A$ such that $k$ is an isometry. A \textbf{dagger kernel category} is a dagger category with a zero object such that every map has a dagger kernel.
\end{definition}

In \cite{robinson1987generalized}, Puystjens and Robinson describe many necessary and sufficient conditions for when a map that has a kernel has a M-P inverse in a dagger category which is enriched over Abelian groups. However, dagger kernels are not discussed in \cite{robinson1987generalized}. Therefore, one could specialize certain results in \cite{robinson1987generalized} for dagger kernels instead. In this paper, we will show that having a M-P inverse \emph{and} a dagger kernel is equivalent to having a GSVD. Also note that, unlike in \cite{robinson1987generalized}, we do not assume that we are working in a setting with negatives (i.e. additive inverses). Because of this, the statement does require a modest extra compatibility condition between the M-P inverse and the dagger kernel. 

\begin{proposition} In a dagger category $(\mathbb{X}, \dagger)$ with $\dagger$-biproducts, a map $f$ has a GSVD if and only if $f$ is M-P split and $f$ has a $\dagger$-kernel $k: \mathsf{ker}(f) \to A$ and $f^\dagger$ has a $\dagger$-kernel $c: \mathsf{ker}(f^\dagger) \to B$ such that $f f^\circ + k^\dagger k = 1_A$ and $f^\circ f+ c^\dagger c = 1_B$. 
\end{proposition}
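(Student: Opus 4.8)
The plan is to prove both implications by identifying the ``padding'' summands $Z$ and $W$ of a GSVD with the $\dagger$-kernels of $f$ and $f^\dagger$. Proposition~\ref{lem:CSVDMP} already tells us that being M-P split is the same as having a GCSVD $(r,d,s)$, and the preceding proposition shows that a GSVD always restricts to such a GCSVD (so M-P splitness comes for free in the forward direction). Hence the real content is the bookkeeping that promotes the coisometry/isometry pair $(r,s)$ to a pair of unitaries $(u,v)$ by adjoining the kernels, with the two compatibility equations supplying exactly the ``identity on the complement'' that is otherwise missing.

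For the forward direction I would start from a GSVD $(u\colon A\to X\oplus Z,\ d\colon X\to Y,\ v\colon Y\oplus W\to B)$ with $f = u(d\oplus 0)v$ and $f^\circ = v^\dagger(d^{-1}\oplus 0)u^\dagger$ (M-P splitness being immediate from the preceding proposition). The claim is that $k := \iota_2 u^\dagger\colon Z\to A$ is a $\dagger$-kernel of $f$ and $c := \iota_2 v\colon W\to B$ is a $\dagger$-kernel of $f^\dagger$. Each is an isometry since $u,v$ are unitary and $\iota_2\iota_2^\dagger = 1$, and each annihilates the relevant map because $\iota_2(d\oplus 0) = 0$. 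The universal property reduces, after cancelling the unitaries, to the fact that $\iota_2$ is the kernel of $d\oplus 0$, which holds because $d$ is an isomorphism. Finally I would compute $ff^\circ = u(\pi_1\iota_1)u^\dagger$ and $k^\dagger k = u(\pi_2\iota_2)u^\dagger$ and add them using the completeness relation $\pi_1\iota_1 + \pi_2\iota_2 = 1_{X\oplus Z}$ together with $uu^\dagger = 1_A$ to get $ff^\circ + k^\dagger k = 1_A$; the dual computation with $v$ gives $f^\circ f + c^\dagger c = 1_B$.

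For the converse I would take the GCSVD $(r\colon A\to X,\ d\colon X\to Y,\ s\colon Y\to B)$ supplied by Proposition~\ref{lem:CSVDMP}, so that $f = rds$, $rr^\dagger = ff^\circ$, $r^\dagger r = 1_X$, $ss^\dagger = 1_Y$ and $s^\dagger s = f^\circ f$. Writing $Z := \mathsf{ker}(f)$ and $W := \mathsf{ker}(f^\dagger)$, I would define $u := r\iota_1 + k^\dagger\iota_2\colon A\to X\oplus Z$ and $v := \pi_1 s + \pi_2 c\colon Y\oplus W\to B$. The two compatibility hypotheses give $uu^\dagger = rr^\dagger + k^\dagger k = ff^\circ + k^\dagger k = 1_A$ and $v^\dagger v = s^\dagger s + c^\dagger c = f^\circ f + c^\dagger c = 1_B$, while $u^\dagger u$ and $vv^\dagger$ collapse to $\pi_1\iota_1 + \pi_2\iota_2 = 1$ once the diagonal blocks $r^\dagger r = 1_X$, $kk^\dagger = 1_Z$, $ss^\dagger = 1_Y$, $cc^\dagger = 1_W$ are inserted and the off-diagonal blocks $kr$ and $cs^\dagger$ are shown to vanish. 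These vanish because $kf = 0$ forces $krds = 0$, and cancelling the isometry $s$ via $ss^\dagger = 1_Y$ and then the isomorphism $d$ yields $kr = 0$; symmetrically $cf^\dagger = 0$ with $r^\dagger r = 1_X$ and $d^\dagger$ invertible gives $cs^\dagger = 0$. A short computation $u(d\oplus 0)v = rds = f$, using $d\oplus 0 = \pi_1 d\iota_1$ and $\iota_1\pi_1 = 1_Y$, then exhibits $(u,d,v)$ as a GSVD.

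The main obstacle I anticipate is not any single hard step but keeping the biproduct bookkeeping exactly right: tracking which injection and projection lives in which summand, and in particular deriving the vanishing of the cross terms $kr$ and $cs^\dagger$ from the kernel equations. Conceptually, the delicate point is recognising that the two compatibility conditions $ff^\circ + k^\dagger k = 1_A$ and $f^\circ f + c^\dagger c = 1_B$ are precisely what replaces the classical orthogonal decomposition of the (co)domain into range and kernel; without negatives these cannot be derived for free, so they must be assumed, and they are exactly the data that promote $r$ and $s$ to unitaries.
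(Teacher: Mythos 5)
Your proposal is correct and follows essentially the same route as the paper: the forward direction extracts the $\dagger$-kernels as $\iota_2 u^\dagger$ and $\iota_2 v$ and derives the two compatibility equations from $\pi_1\iota_1+\pi_2\iota_2=1$, while the converse takes the GCSVD $(r,d,s)$ from Proposition~\ref{lem:CSVDMP} and assembles $u=\begin{bmatrix} r & k^\dagger\end{bmatrix}$, $v=\begin{bmatrix} s\\ c\end{bmatrix}$, using $kr=0$ (and dually $cs^\dagger=0$) exactly as the paper does. The only differences are notational (sums over injections versus block matrices), so there is nothing to add.
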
 
\begin{proof} Suppose that $f: A\to B$ has a GSVD $(u: A \to X \oplus Z, d: X \to Y, v: Y \oplus W \to B)$. We have already explained why $f$ is M-P split in the above lemma. Using the $\dagger$-biproduct identity that $\iota_i \pi_j = 0$ if $i\neq j$ and $\iota_j \pi_j = 1$, it is straightforward to check that $\iota_2 u^\dagger: Z \to A$ is a $\dagger$-kernel of $f$ and that $\iota_2 v: W \to B$ is a $\dagger$-kernel of $f^\dagger$. For the extra identities, we first note that $ff^\circ = u(1_X \oplus 0)u^\dagger$ and $f^\circ f = v^\dagger(1_Y \oplus 0)v$, which we can alternatively write as $ff^\circ = u\pi_1 \iota_1u^\dagger$ and $f^\circ f = v^\dagger\pi_1 \iota_1v$. Then using the other $\dagger$-biproduct identity that $\pi_1 \iota_1 + \pi_2 \iota_2 = 1$, it follows that $f f^\circ + k^\dagger k  =1_A$ and $f^\circ f+ c^\dagger c  =1_B$ as desired. 

Conversely, suppose that $f$ is M-P split, and has a $\dagger$-kernel $k: \mathsf{ker}(f) \to A$ and $f^\dagger$ has a $\dagger$-kernel $c: \mathsf{ker}(f^\dagger) \to B$ such that the two equalities $f f^\circ + k^\dagger k  =1_A$ and $f^\circ f+ c^\dagger c  =1_B$ also hold. Then by Prop \ref{lem:CSVDMP}, $f$ also has a GCSVD $(r: A \to X, d: X \to Y, s: Y \to B)$, so, in particular, $f = r d s$ and $d$ is an isomorphism. Then, using matrix notation, define $u: A \to X \oplus \mathsf{ker}(f)$ and $v: Y \oplus \mathsf{ker}(f^\dagger) \to B$ respectively as $u := \begin{bmatrix} r & k^\dagger \end{bmatrix}$ and $v := \begin{bmatrix} s \\ c \end{bmatrix}$. We first compute that: 
\begin{align*}
  u(d \oplus 0)v = \begin{bmatrix} r & k^\dagger \end{bmatrix}\begin{bmatrix} d & 0 \\
  0 & 0 \end{bmatrix}  \begin{bmatrix} s \\ c \end{bmatrix} = \begin{bmatrix} rd & 0\end{bmatrix}\begin{bmatrix} s \\ c \end{bmatrix} = r d s = f
\end{align*}
So $f = u(d \oplus 0)v$ as desired. We must also show that $u$ and $v$ are unitary. To show that $u$ is unitary, recall that $r r^\dagger = f^\circ f$ and also that since $krd s =kf= 0$, it follows that $kr =0$. Therefore we compute: 
\begin{gather*}
    u u^\dagger = \begin{bmatrix} r & k^\dagger \end{bmatrix} \begin{bmatrix} r & k^\dagger \end{bmatrix}^\dagger = \begin{bmatrix} r & k^\dagger \end{bmatrix} \begin{bmatrix} r^\dagger \\ k \end{bmatrix} = rr^\dagger + k^\dagger k  = ff^\circ+ k^\dagger k = 1_A \\
     u^\dagger u = \begin{bmatrix} r^\dagger \\ k \end{bmatrix} \begin{bmatrix} r & k^\dagger \end{bmatrix} = \begin{bmatrix} r^\dagger r & r^\dagger k^\dagger \\
     kr & kk^\dagger \end{bmatrix} = \begin{bmatrix} 1_A & 0 \\
     0 & 1_{\mathsf{ker}(f)}\end{bmatrix} = 1_{X \oplus \mathsf{ker}(f)}
    \end{gather*}
So $u$ is unitary. Similarly, we can show that $v$ is unitary. So we conclude that $(u,d, v)$ is a GSVD of $f$. 
\end{proof}

\begin{corollary}  In a dagger category $(\mathbb{X}, \dagger)$ with finite $\dagger$-biproducts and negatives, a map $f$ has a GSVD if and only if $f$ is M-P split and both $f$ and $f^\dagger$ have $\dagger$-kernels. 
\end{corollary}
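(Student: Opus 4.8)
The plan is to deduce this corollary directly from the previous proposition, whose only gap relative to the present statement is the pair of compatibility conditions $ff^\circ + k^\dagger k = 1_A$ and $f^\circ f + c^\dagger c = 1_B$. The claim is that, in the presence of negatives, these conditions hold automatically, so that the extra hypotheses on the $\dagger$-kernels become redundant. The forward implication is then immediate: if $f$ has a GSVD, the previous proposition already tells us that $f$ is M-P split and exhibits $\dagger$-kernels for both $f$ and $f^\dagger$, which is exactly what the corollary asserts in that direction.

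For the converse, assume $f$ is M-P split with M-P inverse $f^\circ$, and that $f$ and $f^\dagger$ have $\dagger$-kernels $k : \mathsf{ker}(f) \to A$ and $c : \mathsf{ker}(f^\dagger) \to B$. The core step is to prove $ff^\circ + k^\dagger k = 1_A$. Using negatives, set $e := 1_A - ff^\circ$; since $ff^\circ$ is a $\dagger$-idempotent by Lemma \ref{lem:MPinv1}.(\ref{lem:MPfidem}), so is $e$, and by \textbf{[MP.1]} we get $ef = f - ff^\circ f = 0$. Because $k$ is an isometry ($kk^\dagger = 1_{\mathsf{ker}(f)}$), the idempotent $k^\dagger k$ is a $\dagger$-idempotent with $(k^\dagger k)f = k^\dagger(kf) = 0$. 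The universal property of the $\dagger$-kernel $k$ applied to $e$ (which kills $f$) yields a map $m$ with $mk = e$; postcomposing with $k^\dagger$ gives $m = ek^\dagger$, hence $e = ek^\dagger k$. On the other hand $k^\dagger k\, e = k^\dagger k - k^\dagger(kf)f^\circ = k^\dagger k$, and taking daggers (both $e$ and $k^\dagger k$ are self-adjoint) gives $e k^\dagger k = k^\dagger k$. Combining, $e = e k^\dagger k = k^\dagger k$, which is precisely $ff^\circ + k^\dagger k = 1_A$.

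The second condition $f^\circ f + c^\dagger c = 1_B$ follows by applying the identical argument to $f^\dagger : B \to A$, whose M-P inverse is ${f^\circ}^\dagger$ by Lemma \ref{lem:MPinv1}.(ii) and which satisfies $f^\dagger {f^\circ}^\dagger = (f^\circ f)^\dagger = f^\circ f$; here $c$ plays the role of the $\dagger$-kernel. With both compatibility equalities in hand, the previous proposition applies and produces a GSVD of $f$. The one point requiring care — and the main conceptual content — is the identity $e = k^\dagger k$: it expresses that the kernel projection $k^\dagger k$ is exactly the complement $1_A - ff^\circ$ of the range projection, and its proof is where negatives are genuinely used (to form $e$) together with the universal property of the $\dagger$-kernel. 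Everything else is routine bookkeeping transported across the dagger from $f$ to $f^\dagger$.
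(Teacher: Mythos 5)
Your proof is correct and takes essentially the same approach as the paper: both arguments use negatives to form $e = 1_A - ff^\circ$, apply the universal property of the $\dagger$-kernel to the fact that $ef=0$, and exploit $kk^\dagger = 1_{\mathsf{ker}(f)}$ to identify the induced mediating map, then handle the second equality by the dual argument for $f^\dagger$. The only difference is the final micro-step --- the paper pins down the mediating map as $k^\dagger$ via the identity $f^\circ k^\dagger = 0$ (from Lemma \ref{lem:MPinv1}.(\ref{lem:MPfcirceq})), whereas you reach the same conclusion by combining $e = ek^\dagger k$ with $k^\dagger k\, e = k^\dagger k$ and self-adjointness --- which is an equally valid and equally short piece of bookkeeping.
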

\begin{proof} We need only show that $f f^\circ + k^\dagger k  =1_A$ and $f^\circ f+ c^\dagger c  =1_B$. First note that $(1_A - f f^\circ)f=0$ and $(1_B - f^\circ f)f^\dagger =0$ (the latter of which is by Lemma \ref{lem:MPinv1}.(\ref{lem:MPfdageq})). So by universal property of the kernel, there exist unique maps $z_1$ and $z_2$ such that $z_1k = 1_A - f f^\circ$ and $z_2c = 1_B - f^\circ f$. By post-composing by $k^\dagger$ and $c^\dagger$ respectively, and also by using that $f^\circ k^\dagger =0$ (which follows from  Lemma \ref{lem:MPinv1}.(\ref{lem:MPfcirceq})) and $f c^\dagger =0$, we then obtain that $z_1 = k^\dagger$ and $z_2=c^\dagger$. Therefore, $k^\dagger k= 1_A - f f^\circ$ and $c^\dagger c= 1_B - f^\circ f$, which in turn implies the desired equalities. 
\end{proof}

Therefore, in a setting with negatives and all dagger kernels, we may state that: 

\begin{corollary} In a dagger kernel category $(\mathbb{X}, \dagger)$ with finite $\dagger$-biproducts and negatives, a map $f$ has a GSVD if and only if $f$ is M-P split. 
\end{corollary}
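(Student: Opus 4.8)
The plan is to observe that this final statement is simply a specialization of the preceding corollary, obtained by absorbing its existence hypotheses into the ambient structure. Recall that the previous corollary establishes, in any dagger category with finite $\dagger$-biproducts and negatives, that a map $f$ has a GSVD if and only if $f$ is M-P split \emph{and} both $f$ and $f^\dagger$ have $\dagger$-kernels. The only gap between that statement and the one at hand is the pair of side conditions requiring $f$ and $f^\dagger$ to admit $\dagger$-kernels.

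First I would invoke the definition of a dagger kernel category: it is precisely a dagger category with a zero object in which \emph{every} map has a $\dagger$-kernel. Hence, working now in a dagger kernel category $(\mathbb{X}, \dagger)$ equipped with finite $\dagger$-biproducts and negatives, the maps $f$ and $f^\dagger$ are automatically guaranteed $\dagger$-kernels, with no further hypothesis needed. (Note that the zero object required by the dagger kernel category definition is already present, since finite $\dagger$-biproducts supply it.)

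With this observation in place, the equivalence follows directly. For the forward direction, if $f$ has a GSVD then $f$ is M-P split, which is part of the general statement already proven. For the converse, if $f$ is M-P split, then since both $f$ and $f^\dagger$ possess $\dagger$-kernels by ambient assumption, the hypotheses of the previous corollary are fully met, and we conclude that $f$ has a GSVD. I would simply cite the previous corollary to discharge both directions at once.

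There is essentially no substantive obstacle here; the content of the result lies entirely in the preceding proposition and corollary, and this statement merely repackages them in the cleaner hypothesis of a dagger kernel category. The one point worth stating explicitly, so that the reader sees why the side conditions vanish, is the appeal to the defining property that all maps in a dagger kernel category have $\dagger$-kernels.
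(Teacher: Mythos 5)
Your proof is correct and matches the paper's (implicit) argument exactly: the paper states this corollary immediately after the preceding one with the remark ``in a setting with negatives and all dagger kernels,'' precisely because the only change is that the $\dagger$-kernel hypotheses on $f$ and $f^\dagger$ become automatic in a dagger kernel category. Nothing further is needed.
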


Finally, assuming also that we are in a dagger idempotent complete setting, we obtain a precise characterization of M-P invertible maps in terms of a generalized version of SVD: 

\begin{theorem}\label{thm:MPSVD} In a dagger kernel category $(\mathbb{X}, \dagger)$ that is $\dagger$-idempotent complete and which has finite $\dagger$-biproducts and negatives, a map $f$ is M-P invertible if and only if $f$ has a GSVD. 
\end{theorem}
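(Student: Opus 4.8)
The plan is to deduce this theorem from results already in place, since once the four hypotheses are matched up correctly it reduces to a chaining of two equivalences. The crucial observation is that in a $\dagger$-idempotent complete category the notions of \textbf{M-P invertible} and \textbf{M-P split} coincide, so the theorem will follow by combining this coincidence with the immediately preceding corollary.

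First I would argue that M-P invertibility implies M-P splitting in this setting. Suppose $f$ has a M-P inverse $f^\circ$. By Lemma \ref{lem:MPinv1}.(\ref{lem:MPfidem}), both $ff^\circ$ and $f^\circ f$ are $\dagger$-idempotents. Since the ambient category is $\dagger$-idempotent complete, every $\dagger$-idempotent $\dagger$-splits, so in particular $ff^\circ$ and $f^\circ f$ $\dagger$-split; but this is exactly the extra condition required for $f$ to be M-P split. The converse implication is immediate, as M-P split is defined to include M-P invertibility. Hence, under $\dagger$-idempotent completeness, ``$f$ is M-P invertible'' and ``$f$ is M-P split'' are equivalent. (Equivalently, one may route this step through Theorem \ref{thm:MPGCSVD} and Proposition \ref{lem:CSVDMP}.) Next I would invoke the immediately preceding corollary, which states that in a dagger kernel category with finite $\dagger$-biproducts and negatives, $f$ has a GSVD if and only if $f$ is M-P split. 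Our ambient category satisfies precisely these hypotheses, so ``$f$ has a GSVD'' is equivalent to ``$f$ is M-P split.'' Chaining the two equivalences yields ``$f$ is M-P invertible'' $\Leftrightarrow$ ``$f$ is M-P split'' $\Leftrightarrow$ ``$f$ has a GSVD,'' which is the claim.

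There is no real computational obstacle here: all of the substantive work -- recovering a GSVD from a GCSVD together with the two dagger kernels, and conversely splitting the projection $\dagger$-idempotents -- was already carried out in Proposition \ref{lem:CSVDMP} and the corollaries leading up to this theorem. The only care needed is bookkeeping, namely verifying that each of the four hypotheses is genuinely used: $\dagger$-idempotent completeness bridges M-P invertibility and M-P splitting, while the dagger kernel, finite $\dagger$-biproduct, and negatives hypotheses are exactly those feeding the preceding corollary.
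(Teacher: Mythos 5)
Your proposal is correct and matches the paper's (implicit) argument: the paper states this theorem without proof precisely because it follows by chaining the preceding corollary (GSVD $\Leftrightarrow$ M-P split in a dagger kernel category with finite $\dagger$-biproducts and negatives) with the observation that $\dagger$-idempotent completeness collapses M-P invertibility and M-P splitting, via Lemma \ref{lem:MPinv1}.(\ref{lem:MPfidem}). Your bookkeeping of which hypothesis feeds which step is accurate.
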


\section{Polar Decomposition}\label{sec:CPD}

It is straightforward to give a generalized version of polar decomposition (PD) in a dagger category, it is the statement that a map factorizes as a partial isometry followed by a positive map. However, the statement of PD for bounded linear maps between Hilbert spaces is stronger: it also involves a requirement on the kernel (or range) of the partial isometry. In \cite[Thm 8.3]{higham2008functions}, Higham nicely explains how M-P inverses can play a role in the PD of complex matrices and can be used to replace that extra requirement. So recall that for an $n \times m$ complex matrix, $A$, there exists a unique partial isometry $U$ and unique a positive semi-definite Hermitian matrix $H$ such that $A=UH$ and $\mathsf{range}(U^\dagger) = \mathsf{range}(H)$. The matrix $H$ is given by the square root of the matrix $A^\dagger A$, so $H = (A^\dagger A)^{\frac{1}{2}}$, while the matrix $U$ is constructed using the M-P inverse of $H$, so $U = A H^\circ$. Furthermore, the condition $\mathsf{range}(U^\ast) = \mathsf{range}(H)$ can be equivalently described in terms of M-P inverses as the equality $U^\dagger U = H H^\circ$ (where note that $U^\circ = U^\dagger$ since $U$ is a partial isometry). Therefore PD of complex matrices can be completely expressed in terms of M-P inverses. As such in this section, we introduce the notion of a Moore-Penrose polar decomposition of maps in an arbitrary dagger category, which recaptures precisely PD for complex matrices. 

\begin{definition} In a dagger category $(\mathbb{X}, \dagger)$, for a map $f: A \to B$,
\begin{enumerate}[{\em (i)}]
\item A \textbf{generalized polar decomposition} (GPD) of $f$ is a pair of maps $(u: A \to B, h: B \to B)$ where $u$ is a partial isometry and $h$ is a positive map such that $f=uh$;
\item A \textbf{Moore-Penrose polar decomposition} (M-P PD) of $f$ is a GPD $(u: A \to B, h: B \to B)$ of $f$ such that $h$ is M-P invertible and $u^\dagger u = h h^\circ$.
\end{enumerate}
\end{definition}

We will show that for $f$ to have a M-P PD is equivalent to requiring that $f$ be  M-P invertible and $f^\dagger f$ has a square-root. The following definition is a Moore-Penrose version of Selinger's definition \cite[Def 5.13]{selinger2008idempotents}. 

\begin{definition} In a dagger category $(\mathbb{X}, \dagger)$, a M-P invertible positive map $p: A \to A$ has a \textbf{Moore-Penrose square root} (M-P square root) if there exists a M-P invertible positive map $\sqrt{p}: A \to A$ such that $\sqrt{p} \sqrt{p} = p$. A dagger category is said to have \textbf{(unique) M-P square roots} if all M-P invertible positive maps have a (unique) M-P square root. 
\end{definition}

\begin{proposition}\label{prop:MPPD} In a dagger category $(\mathbb{X}, \dagger)$, a map $f$ has a M-P PD if and only if $f$ is M-P invertible and $f^\dagger f$ has a M-P square root. 
\end{proposition}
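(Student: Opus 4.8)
The plan is to read the matrix recipe $H=(A^\dagger A)^{\frac{1}{2}}$, $U=AH^\circ$ categorically and verify the two directions by direct computation. Throughout I would exploit that a positive map is self-adjoint (if $h=gg^\dagger$ then $h^\dagger=h$), and that by Lemma~\ref{lem:MPinv1}.(\ref{lem:MPselfadjoint}) the M-P inverse $h^\circ$ of a self-adjoint $h$ is again self-adjoint and commutes with $h$. Writing $e:=hh^\circ=h^\circ h$ for a positive M-P invertible $h$, the workhorse identities $he=eh=h$, $h^\circ e=eh^\circ=h^\circ$, and $e^\dagger=e=ee$ all drop out of \textbf{[MP.1]}, \textbf{[MP.2]} and commutativity; these drive both directions.

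For the ($\Leftarrow$) direction, assume $f$ is M-P invertible and set $h:=\sqrt{f^\dagger f}$, a positive M-P invertible map with $hh=f^\dagger f$. I would define $u:=fh^\circ$ and check the four requirements of an M-P PD. The factorization $uh=f$ reduces to the single identity $fh^\circ h=f$: first $f^\dagger f\,h^\circ h = h(hh^\circ)h = hh = f^\dagger f$ by the workhorse identities, and then one cancels on the left using $(f^\dagger)^\circ f^\dagger=ff^\circ$ (Lemma~\ref{lem:MPinv1}.(v)) together with \textbf{[MP.1]}. The other conditions are immediate: $u^\dagger u=h^\circ(f^\dagger f)h^\circ=(h^\circ h)(hh^\circ)=hh^\circ$ gives the defining equation $u^\dagger u=hh^\circ$, and $uu^\dagger u=u(hh^\circ)=f(h^\circ hh^\circ)=fh^\circ=u$ shows $u$ is a partial isometry.

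For the ($\Rightarrow$) direction, assume an M-P PD $(u,h)$ of $f$. The crucial computation is $f^\dagger f=h^\dagger u^\dagger u\, h=h(hh^\circ)h=hh$, which uses precisely the defining condition $u^\dagger u=hh^\circ$ and self-adjointness of $h$; hence $h$ is a positive M-P invertible square root of $f^\dagger f$, so $f^\dagger f$ has an M-P square root. It then remains to show $f$ is M-P invertible, for which I would propose $f^\circ:=h^\circ u^\dagger$ and verify \textbf{[MP.1]}--\textbf{[MP.4]}: here $ff^\circ=u(u^\dagger u)u^\dagger=uu^\dagger$ is self-adjoint because $u$ is a partial isometry, $f^\circ f=h^\circ(u^\dagger u)h=hh^\circ$ is self-adjoint, \textbf{[MP.1]} follows from $uu^\dagger u=u$, and \textbf{[MP.2]} from $h^\circ$-absorption. (That $f^\dagger f$ is automatically M-P invertible once $f$ is comes from Lemma~\ref{lem:MPinv1}.(\ref{lem:MPdagidem}), so the M-P square root statement is well-posed.)

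I expect the main obstacle to be the single cancellation $fh^\circ h=f$ in the ($\Leftarrow$) direction. Unlike every other step, it cannot be obtained by juggling $h$ and $e$ alone: it genuinely needs the interaction of $f$ with its own M-P inverse, forcing the use of the compatibility $(f^\dagger)^\circ f^\dagger=ff^\circ$. Once that identity is in hand, everything else is routine bookkeeping with the $\dagger$-idempotent $e=hh^\circ$ and the partial-isometry relations $uu^\dagger u=u$ and $u^\dagger uu^\dagger=u^\dagger$.
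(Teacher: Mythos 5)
Your proposal is correct and follows essentially the same route as the paper's proof: both directions rest on exactly the same candidate maps $h=\sqrt{f^\dagger f}$, $u=fh^\circ$, and $f^\circ=h^\circ u^\dagger$, and on the same commutation facts for a self-adjoint M-P invertible $h$. The only local deviations are that in the forward direction you verify \textbf{[MP.1]}--\textbf{[MP.4]} for $h^\circ u^\dagger$ directly where the paper invokes Lemma~\ref{lem:MPcomp}.(ii), and in the backward direction you obtain $fh^\circ h=f$ by left-cancelling $f^\dagger$ via $(f^\dagger)^\circ f^\dagger=ff^\circ$ where the paper instead passes through $(hh)^\circ=h^\circ h^\circ$ and $f(f^\dagger f)^\circ f^\dagger f=f$; both variants are valid.
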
 
\begin{proof} Suppose that $(u: A \to B, h: B \to B)$ is a M-P PD of a map $f: A \to B$. Since $u$ is a partial isometry, $u$ is M-P invertible where $u^\circ = u^\dagger$. We also have that since $h$ is positive, it is self-dual $h^\dagger =h$, and by Lemma \ref{lem:MPinv1}.(\ref{lem:MPselfadjoint}) we have that $h^\circ h = h h^\circ$. Therefore by the assumption that $u^\dagger u = h h^\circ$, it easily follows that $u^\dagger u h h^\circ = h h^\circ u^\dagger u$. Therefore by Lemma \ref{lem:MPcomp}.(ii), we have that $f =uh$ is M-P invertible whose M-P inverse is $f^\circ = (uh)^\circ = h^\circ u^\dagger$. By Lemma \ref{lem:MPinv1}.(\ref{lem:MPdagidem}), $f^\dagger f$ is a M-P invertible positive map. So it remains to compute that:
\[h h =  h h h^\circ h = h u^\dagger u h = (uh)^\dagger uh = f^\dagger f\] 
So $hh =f^\dagger f$, and therefore $h$ is a M-P square root of $f^\dagger f$.

Conversely, suppose that $f$ is M-P invertible and $f^\dagger f$ has a M-P square root $\sqrt{f^\dagger f}$. So define $h: B \to B$ as $h := \sqrt{f^\dagger f}$, and define $u: A \to B$ as the composite $u := fh^\circ$. We then compute the following: 
\begin{gather*}
    u u^\dagger u = fh^\circ (fh^\circ)^\dagger fh^\circ = fh^\circ h^\circ f^\dagger f h^\circ = fh^\circ h^\circ h h h^\circ = fh^\circ h h^\circ  h h^\circ = f h^\circ h h^\circ = f h^\circ = u \\
     u^\dagger u =  (fh^\circ)^\dagger fh^\circ = h^\circ f^\dagger f h^\circ = h^\circ h h h^\circ = h h^\circ h h^\circ = h h^\circ \\ 
        uh = f h^\circ h = fh^\circ h h^\circ h = f h^\circ h^\circ h h = f (h h)^\circ h h =
   f (f^\dagger f)^\circ f^\dagger f = f f^\circ f = f 
\end{gather*}
So $u$ is an isometry, $ u^\dagger u=h h^\circ$, and $f=uh$. So we conclude that $(u,h)$ is a M-P PD of $f$. 
\end{proof}

\begin{corollary}  In a dagger category $(\mathbb{X}, \dagger)$ with M-P square roots, a map $f$ is M-P invertible if and only if $f$ has a M-P PD. 
\end{corollary}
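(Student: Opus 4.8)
The plan is to derive this corollary almost immediately from Proposition \ref{prop:MPPD}, using the standing hypothesis that the category has M-P square roots to discharge the side condition appearing there. Recall that Proposition \ref{prop:MPPD} states that $f$ has a M-P PD if and only if $f$ is M-P invertible \emph{and} $f^\dagger f$ has a M-P square root. Thus the backward implication---if $f$ has a M-P PD then $f$ is M-P invertible---is simply one half of that biconditional and requires no further argument.

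For the forward implication, I would assume $f: A \to B$ is M-P invertible and show that $f^\dagger f$ automatically meets the two conditions needed to invoke the M-P square root assumption. First, $f^\dagger f$ is M-P invertible by Lemma \ref{lem:MPinv1}.(\ref{lem:MPdagidem}). Second, $f^\dagger f$ is positive: taking the witnessing map to be $f^\dagger$ itself, one has $f^\dagger f = f^\dagger (f^\dagger)^\dagger$, which is exactly of the required form $g g^\dagger$ with $g = f^\dagger$. Hence $f^\dagger f$ is a M-P invertible positive map, and the hypothesis that the category has M-P square roots then guarantees that $f^\dagger f$ has a M-P square root. Both conjuncts on the right-hand side of Proposition \ref{prop:MPPD} now hold, so $f$ has a M-P PD.

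There is no genuine obstacle here: the entire substance of the corollary is already packaged inside Proposition \ref{prop:MPPD}, and the only thing one checks by hand is the trivial observation that $f^\dagger f$ is positive, so that the M-P square root hypothesis is applicable to it. In effect, this corollary is just the specialization of Proposition \ref{prop:MPPD} to a setting in which the square-root side condition is automatically satisfied for every map of the form $f^\dagger f$.
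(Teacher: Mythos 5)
Your proof is correct and matches the paper's intent exactly: the corollary is stated without proof precisely because it is the immediate specialization of Proposition \ref{prop:MPPD}, and your verification that $f^\dagger f$ is a M-P invertible positive map (hence covered by the M-P square roots hypothesis) is the only detail that needs checking.
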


Unlike PD which is always unique, M-P PD is not necessarily unique in an arbitrary dagger category. The reason PD is unique is due to the fact that positive semi-definite Hermitian matrices have unique square roots. Therefore, if we work in a dagger category where the positive maps do have unique square roots, then M-P PD is also unique as desired. 

\begin{lemma} In a dagger category $(\mathbb{X}, \dagger)$ with unique M-P square roots, M-P PDs are unique. 
\end{lemma}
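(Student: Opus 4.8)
The plan is to show that both components of a M-P PD are completely determined by $f$, so suppose $(u_1,h_1)$ and $(u_2,h_2)$ are two M-P PDs of a map $f: A \to B$. First I would recall the computation carried out in the forward direction of the proof of Proposition \ref{prop:MPPD}: for any M-P PD $(u,h)$ of $f$ one has $hh = f^\dagger f$, so $h$ is a M-P square root of the M-P invertible positive map $f^\dagger f$. Applying this to both decompositions, $h_1$ and $h_2$ are M-P square roots of the same map $f^\dagger f$. Since $(\mathbb{X},\dagger)$ is assumed to have \emph{unique} M-P square roots, I conclude $h_1 = h_2$; write $h$ for this common positive map.

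It then remains to recover the partial isometry. The key step is the identity $u = f h^\circ$, valid for any M-P PD $(u,h)$ of $f$. To prove it I would start from $f = uh$ and post-compose with $h^\circ$, obtaining $f h^\circ = u h h^\circ$. Then I use the defining compatibility condition $u^\dagger u = h h^\circ$ of a M-P PD together with the fact that $u$ is a partial isometry, so that $u h h^\circ = u u^\dagger u = u$. This gives $u = f h^\circ$. Applying this to both decompositions and using $h_1 = h_2 = h$ yields $u_1 = f h^\circ = u_2$, and hence $(u_1,h_1) = (u_2,h_2)$.

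I expect the only real (and still modest) obstacle to be this recovery identity: the uniqueness of $h$ is immediate once one notices it is a square root of $f^\dagger f$, but extracting $u$ requires combining the partial-isometry equation $u u^\dagger u = u$ with the M-P PD condition $u^\dagger u = h h^\circ$ in exactly the right order. Everything else is routine bookkeeping.
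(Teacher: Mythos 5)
Your proof is correct and follows essentially the same route as the paper: both arguments identify $h$ as the unique M-P square root of $f^\dagger f$ and then recover $u$ as $fh^\circ$. Your explicit derivation of $u = fh^\circ$ (via $fh^\circ = uhh^\circ = uu^\dagger u = u$) is a welcome bit of extra detail, since the paper attributes that identity to Proposition \ref{prop:MPPD} without spelling out the computation.
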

\begin{proof} Suppose that ${(u: A \to B, h: B \to B)}$ and ${(v: A \to B, k: B \to B)}$ are both M-P PDs of a map $f: A \to B$. By Proposition \ref{prop:MPPD}, we have that $u=fh^\circ$ and $v=fk^\circ$, and that $h$ and $k$ are positive maps such that $hh=f^\dagger f=kk$. By the uniqueness of M-P square roots, this implies that $h=k$. In turn, this also implies that $u = v$. So we conclude that a M-P PD is unique. 
\end{proof}

\section{Conclusion}

In this paper, we revisited and added to the story of Moore-Penrose inverses in a dagger category. This work was motivated in part by wishing to understand how partial isomorphisms (in the restriction categories sense) generalize to dagger categories: Moore-Penrose inverses seem to provide the appropriate generalization. However, their theory is more sophisticated and could be better understood. In particular, although a start has been made here, there is more to be understood about their compositional behaviour and their relation to dagger idempotents. Moore-Penrose inverses should also be considered in relation to other dagger structures, such as dagger limits \cite{heunen2018limits}, dagger monads \cite{heunen2016monads}, and dagger compact closedness \cite{selinger2007dagger}. One should also find other interesting examples of Moore-Penrose dagger categories. We conjecture that certain fragments of the ZX-calculus \cite{coecke2018picturing} and possibly PROPs with weights on strings will be Moore-Penrose dagger categories. Finally, as the Moore-Penrose inverse has many practical applications, it would also be worthwhile generalizing these applications to Moore-Penrose dagger categories. This may in turn lead to further applications for Moore-Penrose inverses. 



\nocite{*}
\bibliographystyle{eptcs}
\bibliography{generic}
\end{document}